
\documentclass[american,11pt]{article}
\usepackage{}
\usepackage[runin]{abstract}
\usepackage{amsfonts}
\usepackage{amsmath}
\usepackage{amssymb}
\usepackage{amsthm}
\usepackage{appendix}
\usepackage{babel}
\usepackage[a4paper, top=1.2cm, bottom=1.2cm, left=2.7cm, right=2.7cm,
            includefoot, includehead]{geometry}
\usepackage[bookmarks=true, bookmarksnumbered=false,
 bookmarksopen=true, breaklinks=false, pdfborder={0 0 1},
 backref=false, colorlinks=false]{hyperref}
\hypersetup{pdftitle={The Multiple Permutation Problem and Some Conjectures},
            pdfauthor={Zan Pan},
            pdfsubject={enumerative combinatorics},
            pdfkeywords={multiple permutation, enumeration, asymptotic formula}}
\usepackage[center, small, bf, indentafter]{titlesec}
\titlelabel{\thetitle.\quad}

\abslabeldelim{.\quad}
\setlength{\abstitleskip}{-15pt}

\newcommand{\ang}[1]{\ensuremath{\langle #1 \rangle}}

\theoremstyle{plain}
\newtheorem{theorem}{Theorem}
\newtheorem{conjecture}[theorem]{Conjecture}

\begin{document}
\title{\normalsize{\textbf{THE MULTIPLE PERMUTATION PROBLEM AND \\
       SOME CONJECTURES}}}
\author{\small{Zan Pan}\\
 \href{mailto:panzan@mail.ustc.edu.cn}{\small{panzan@mail.ustc.edu.cn}}\\
 \textsc{\footnotesize{University of Science and Technology of China,
  Hefei, 230026, China}}}
\date{}
\maketitle

\begin{abstract}
 In this paper, we proposed an interesting problem that might be classified into
 enumerative combinatorics. Featuring a distinctive two-fold dependence upon the
 sequences' terms, our problem can be really difficult, which calls for novel
 approaches to work it out for any given pair $(m,n)$. Complete or partial
 solutions for $m=2$, 3 with smaller $n$'s are listed. Moreover, we have proved
 the necessary condition for $p(m,n)\neq 0$ and suggested an elegant asymptotic
 formula for $p(2,n)$. In addition, several challenging conjectures are provided,
 together with concise comments.
\end{abstract}

\section{Introduction}

 Permutation, as one of the classical topics in combinatorics, has been studied
 in different contexts and brought about numerous applications \cite{Bona, Knu98}.
 By contrast, the investigations on its generalization to multisets seem
 insufficient. Perhaps this is mainly due to the shortage of unique methods
 and fascinating problems. Primarily motivated by an illustrative example
 of the transfer matrix method in R.~P.~Stanley's notable book \cite{Stan97},
 we proposed its variant generalization to multisets, which is stated as
 follows:
\begin{quote}
  Let $[n]$ represent $\{1,2,\ldots,n\}$ and $\mathfrak{S}_n$ denote the set of
  all its  permutations. Given a multiset $M=[n]^m$ with the multiplicity $m$,
  find  the set of all its permutations denoted by $\mathfrak{S}_n^m$, in which
  any two identical elements are separated by as many terms as their own value.
  For instance, 3 1 2 1 3 2 is the simplest case.  To put it differently,
  we use $\varpi(i)$ to represent the $i$-th element of a sequence. Then
  $\mathfrak{S}_n^m$ is isomorphic to $\{\mathcal{S}:\mathcal{S}\in
  \mathfrak{S}_{mn}\}$, for whose elements the following conditions hold
 \[
  \begin{cases}
   \varpi(i)+n=\varpi\big(i+\varpi(i)+1\big),
    & \text{if } i+\varpi(i)+1\leqslant mn\\
   \varpi(i)-n=\varpi\big(n+i-\varpi(i)-1\big),
    & \text{otherwise}
  \end{cases}
 \]

 For the convenience of discussion, we would like to exert another restriction:
 if two solutions,  such as 4 1 3 1 2 4 3 2 and 2 3 4 2 1 3 1 4, can be mutually
 generated by an inversion of each other, only the one with smaller index
 of the first $n$ or the same index but greater initial element is counted.
 The notation $p(m,n)$ will be consistently used to represent the result
 of such an enumeration.
\end{quote}

 Although it may be somewhat confusing, we still call it the multiple permutation
 problem.  Challenge for us is to find out all the solutions for any given pair
 $(m,n)$ and to explore the mathematical structure hidden behind those sequences.
 Special attention should be paid to the observation that the conditions described
 above have a two-fold dependence on the sequences' terms, which makes the problem
 really difficult to solve. It seems that there are no effective methods available
 at present. We believe that a deep survey into this problem will yield a dozen
 of nontrivial results and help us develop ingenious ideas. Just like D.~Foata's
 introduction of the intercalation product which extends many known results
 about ordinary permutations to multiset permutations~\cite{Knu98}, novel
 approaches to this problem are certainly conducive to figuring out
 many enumerative problems regarding multisets.

\section{\label{sec:search}Search for Solutions}

 It is easy to see that only one solution exists for $(2,3)$ and $(2,4)$,
 but things become complicate for a larger $n$ or other $m$'s.
 With the help of computer programming using trace-back algorithm,
 we have obtained complete or partial solutions for certain pairs $(m,n)$.
 For $m>3$, the required computing time is too long, so we have not taken into
 account those cases. The following offers a glance of selected results:
\begin{quote}
  7 3 6 2 5 3 2 4 7 6 5 1 4 1\\
  5 7 4 1 6 1 5 4 3 7 2 6 3 2\\
  2 7 4 2 3 5 6 4 3 7 1 5 1 6\\
  2 4 7 2 3 6 4 5 3 1 7 1 6 5\\
  8 3 7 2 6 3 2 4 5 8 7 6 4 1 5 1\\
  2 8 3 2 4 6 3 7 5 4 8 1 6 1 5 7\\
  7 2 8 3 2 4 6 3 7 5 4 8 1 6 1 5\\
  4 2 6 8 2 4 7 5 1 6 1 3 8 5 7 3\\
  11 6 10 2 9 3 2 8 6 3 7 5 11 10 9 4 8 5 7 1 4 1\\
  1 3 1 4 11 3 6 7 4 9 10 8 5 6 2 7 11 2 5 9 8 10\\
  12 10 11 6 4 5 8 9 7 4 6 5 10 12 11 8 7 9 3 1 2 1 3 2\\
  6 1 3 1 8 12 3 6 7 10 11 9 2 8 4 2 7 5 12 4 10 9 11 5\\
  15 13 14 8 5 12 7 11 4 10 5 9 8 4 7 13 15 14 12 11 10 9 6 3 1 2 1 3 2 6\\
  4 1 5 1 2 4 15 2 5 7 10 8 11 13 14 12 9 7 6 3 8 10 15 3 11 6 9 13 12 14\\
  16 14 15 9 7 13 3 12 6 11 3 10 7 9 8 6 14 16 15 13 12 11 10 8 5 2 4 1 2 1 5 4\\
  9 3 1 4 1 3 10 16 4 6 9 11 12 14 15 13 6 10 2 7 5 2 8 11 16 12 5 7 14 13 15 8\\
  19 17 18 14 8 16 9 15 6 1 13 1 12 8 11 6 9 10 14 17 19 18 16 15 13 12 11 7 10 3
  5 2 4 3 2 7 5 4\\
  20 18 19 15 11 17 10 16 9 5 14 1 13 1 12 5 11 10 9 15 18 20 19 17 16 14 13 12 8
  4 7 3 6 2 4 3 2 8 7 6\\
  21 14 11 19 10 12 13 17 15 16 20 22 23 18 11 10 14 9 12 8 13 7 21 19 15 17 16 9
  8 7 6 20 18 5 22 4 23 6 3 5 4 2 3 1 2 1\\
  22 16 12 17 19 11 13 14 18 21 23 24 20 15 10 12 9 11 16 8 13 17 14 22 19 10 9 18
  8 15 7 21 6 20 23 5 24 4 7 6 3 5 4 2 3 1 2 1\\
  \\
  1 9 1 6 1 8 2 5 7 2 6 9 2 5 8 4 7 6 3 5 4 9 3 8 7 4 3\\
  1 9 1 2 1 8 2 4 6 2 7 9 4 5 8 6 3 4 7 5 3 9 6 8 3 5 7\\
  3 4 7 9 3 6 4 8 3 5 7 4 6 9 2 5 8 2 7 6 2 5 1 9 1 8 1\\
  1 10 1 6 1 7 9 3 5 8 6 3 10 7 5 3 9 6 8 4 5 7 2 10 4 2 9 8 2 4\\
  1 10 1 2 1 4 2 9 7 2 4 8 10 5 6 4 7 9 3 5 8 6 3 10 7 5 3 9 6 8\\
  4 10 1 7 1 4 1 8 9 3 4 7 10 3 5 6 8 3 9 7 5 2 6 10 2 8 5 2 9 6\\
  8 1 10 1 3 1 9 6 3 8 4 7 3 10 6 4 9 5 8 7 4 6 2 5 10 2 9 7 2 5\\
  1 3 1 10 1 3 4 9 6 3 8 4 5 7 10 6 4 9 5 8 2 7 6 2 5 10 2 9 8 7\\
  17 15 3 16 9 10 3 1 12 1 3 1 13 14 9 6 10 15 17 5 16 12 6 11 9 5 13 10 14 6 7 5
  8 15 12 11 17 16 7 4 13 8 2 14 4 2 7 11 2 4 8\\
  9 3 1 4 1 3 1 17 4 3 9 14 10 4 15 16 11 6 8 13 9 7 12 10 6 17 14 8 11 7 15 6 16
  13 10 12 8 7 5 2 11 14 2 17 5 2 15 13 12 16 5\\
  18 16 5 17 11 4 2 9 5 2 4 14 2 15 5 4 11 9 16 18 12 17 13 6 7 8 14 9 11 15 6 10
  7 12 8 16 13 6 18 17 7 14 10 8 3 15 12 1 3 1 13 1 3 10\\
  4 12 10 7 13 4 2 18 14 2 4 7 2 10 12 6 16 17 13 7 15 9 6 14 10 11 18 12 5 6 8 9
  13 16 5 17 15 11 14 8 5 9 1 3 1 18 1 3 8 11 16 3 15 17\\
  19 17 13 18 4 11 8 2 16 4 2 9 15 2 4 8 13 11 14 17 19 9 18 12 8 16 5 7 15 11 13
  9 5 14 10 7 12 17 5 6 19 18 16 7 15 10 6 3 14 12 1 3 1 6 1 3 10\\
  10 1 3 1 13 1 3 7 19 14 3 10 8 15 12 7 17 18 13 5 16 8 10 7 14 5 11 12 19 15 8 5
  13 9 17 6 18 16 11 14 12 2 6 9 2 15 4 2 19 6 11 4 17 9 16 18 4
\end{quote}

  A much more complete list of these solutions can be found in the appendix.
  Now we only focus on the analysis of its enumeration.
  It comes as a surprise that $p(m,n)$ increases extremely fast with
  respect to $n$:

\begin{center}
\begin{tabular}{lll}
  &$p(2,7)=26$                           &$p(3,8)=0$\\
  &$p(2,8)=150$                          &$p(3,9)=3$\\
  &$p(2,11)=17792$                       &$p(3,10)=5$\\
  &$p(2,12)=108144$                      &$p(3,17)=13440$\\
  &$p(2,15)=39809640$                    &$p(3,18)=54947$\\
  &$p(2,16)=326721800$\qquad\qquad\qquad &$p(3,19)=249280$
\end{tabular}
\end{center}

\section{Existence and Asymptosy}

 From the data above, we could see clearly not consecutive $n$'s are included,
 which means that only for a certain type of integers can the problem exist
 solutions. A brief analysis conforms this point.

\begin{theorem}
 \label{prime}
 If $m$ is a prime, then the necessary condition for $p(m,n)\neq0$
 is that $n\equiv-1$, $0$, $1$, $\dots$, $(m-2)(\!\!\!\mod m^2)$.
\end{theorem}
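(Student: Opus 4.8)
The plan is to translate the separation condition into a statement about arithmetic progressions of positions and then count positions by their residue modulo $m$. First I would record the structural observation that, in any realisation of an element of $\mathfrak{S}_n^m$ as an arrangement of the $mn$ positions $1,2,\dots,mn$, the $m$ copies of a value $v\in[n]$ must occupy positions $p_v,\ p_v+(v+1),\ p_v+2(v+1),\ \dots,\ p_v+(m-1)(v+1)$ for some starting position $p_v$: consecutive copies of $v$ are separated by $v$ intervening terms, hence by $v+1$ in position. Consequently $\{1,\dots,mn\}$ is partitioned into $n$ arithmetic progressions, the one attached to $v$ having length $m$ and common difference $v+1$.

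Next I would work modulo $m$, which is exactly where primality enters. For a value $v$ with $v\not\equiv-1\pmod m$ the common difference $v+1$ is a unit in $\mathbb{Z}/m\mathbb{Z}$, so $j\mapsto p_v+j(v+1)\bmod m$ is a bijection of $\mathbb{Z}/m\mathbb{Z}$ and the $m$ copies of $v$ are distributed one into each residue class. For a value $v$ with $v\equiv-1\pmod m$ the common difference is $\equiv0$, so all $m$ copies fall into the single class $p_v\bmod m$. Put $t=\#\{v\in[n]:v\equiv-1\pmod m\}$ and, for each residue $r$, let $a_r$ be the number of such $v$ with $p_v\equiv r\pmod m$. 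Since each residue class contains exactly $n$ of the positions $1,\dots,mn$, counting positions in class $r$ gives $n=(n-t)+m\,a_r$, hence $m\,a_r=t$ and in particular $m\mid t$. This single divisibility is the whole of the constraint.

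Finally I would convert $m\mid t$ into the stated congruence. Writing $n=qm+s$ with $0\le s\le m-1$, a direct count of those $v\in[1,n]$ with $v\equiv-1\pmod m$ gives $t=q$ when $0\le s\le m-2$ and $t=q+1$ when $s=m-1$. Imposing $m\mid t$ then forces $q\equiv0\pmod m$ in the first case, i.e. $n=m^2q'+s$ with $0\le s\le m-2$, and $q\equiv-1\pmod m$ in the second, i.e. $n=m^2q'-1$; together these say precisely that $n\equiv-1,0,1,\dots,m-2\pmod{m^2}$. I expect the residue-class bijection of the second paragraph to be the main point: it is the place where primality of $m$ is essential, since for composite $m$ a value $v$ with $1<\gcd(v+1,m)<m$ spreads its copies over a proper coset and the clean identity $m\,a_r=t$ fragments into a more delicate linear system, which is presumably why the theorem is stated only for prime $m$. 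A minor point to handle carefully is the boundary in the count of $t$ (the split between $s\le m-2$ and $s=m-1$), which is what produces the residue $-1$ alongside $0,1,\dots,m-2$; for $m=2$ the whole argument specialises to the classical Langford condition $n\equiv0,3\pmod4$.
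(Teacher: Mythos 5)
Your proof is correct and follows essentially the same route as the paper's: both partition the positions $1,\dots,mn$ by residue modulo $m$, use primality to show that each value $v\not\equiv-1\pmod m$ contributes one copy to each residue class while each $v\equiv-1\pmod m$ dumps all $m$ copies into a single class, and conclude that the count of such $v$ must be divisible by $m$. Your final paragraph merely spells out explicitly the arithmetic converting $m\mid t$ into the congruence modulo $m^2$, a step the paper's proof leaves implicit.
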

\begin{proof}
 For a given solution $\mathcal{S}$, let $\theta_k$ denote the index of integer 
 $k$'s first occurrence in it. Then $\theta_k+k+1$, $\theta_k+2(k+1)$, $\dots$,
 $\theta_k+(m-1)(k+1)$ are indices of other successive $k$'s. Partition the
 sequence into $m$ subsequences according to the remainders of each
 element's index divided by $m$. For example, 1 9 1 2 1 8 2 4 6 2 7 9 4
 5 8 6 3 4 7 5 3 9 6 8 3 5 7 can be partitioned into 1 2 2 2 4 6 7 9
 3 $\sqcup$ 9 1 4 7 5 3 5 6 5 $\sqcup$ 1 8 6 9 8 4 3 8 7, where $\sqcup$
 is defined as an operator that indicates the equivalence of there subsequences
 to the solution $\mathcal{S}$. We may call it the congruential representation.
 If $k\neq m-1$, then $[\theta_k]$, $[\theta_k+k+1]$, $\dots$,
 $[\theta_k+(m-1)(k+1)]$ forms a complete residue system with the modulus $m$,
 so each subsequence can get a $k$; if $k=m-1$, then $[\theta_k]=[\theta_k+k+1]
 =\cdots=[\theta_k+(m-1)(k+1)]$, and only one subsequence possesses all the $k$'s.
 Therefore, there must exist $mt$ $(t\in\!\mathbb{N})$ integers whose remainders
 with the modulus $m$ are $-1$ so that all subsequences have the same many
 elements. This requires the claimed distribution for the values of $n$.
\end{proof}
 It is enlightening to emphasize that $m$ must be a prime in Theorem~\ref{prime}.
 If not, the condition for solutions' existence might be totally different.
 Set $m=4$, and through a similar but more complex discussion we can come to
 the following proposition.
\newtheorem{proposition}[theorem]{Proposition}
\begin{proposition}
  The necessary condition for $p(4,n)\neq0$ is $n=8t$ or $n=8t-1$
  $(t\in\!\mathbb{N},t\geqslant2)$.
\end{proposition}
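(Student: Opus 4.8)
The plan is to re-run the congruential-representation argument of Theorem~\ref{prime} with $m=4$, but this time tracking the \emph{lengths} of the four subsequences, since $4$ is composite and the clean ``complete residue system'' dichotomy no longer holds. Fix a hypothetical solution $\mathcal{S}\in\mathfrak{S}_{4n}$ and split it into subsequences $\mathcal{S}_0,\mathcal{S}_1,\mathcal{S}_2,\mathcal{S}_3$ according to the residue of the position modulo $4$; each then has length exactly $n$. If a value $k\in[n]$ first appears at position $\theta_k$, its four copies sit at positions $\theta_k,\theta_k+(k+1),\theta_k+2(k+1),\theta_k+3(k+1)$, so the position-classes modulo $4$ that they occupy form the coset $\theta_k+\langle k+1\rangle$ of the cyclic subgroup $\langle k+1\rangle$ inside $\mathbb{Z}/4\mathbb{Z}$, each class in that coset being used $\gcd(k+1,4)$ times.

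First I would sort out the three regimes. If $k\equiv 0$ or $2\pmod 4$ then $\gcd(k+1,4)=1$, the coset is all of $\mathbb{Z}/4\mathbb{Z}$, and each $\mathcal{S}_i$ gains exactly one copy of $k$. If $k\equiv 3\pmod 4$ then $\gcd(k+1,4)=4$ and a single subsequence, $\mathcal{S}_{\theta_k\bmod 4}$, absorbs all four copies. The delicate case is $k\equiv 1\pmod 4$, where $\gcd(k+1,4)=2$ and two subsequences — namely $\{\mathcal{S}_0,\mathcal{S}_2\}$ if $\theta_k$ is even, or $\{\mathcal{S}_1,\mathcal{S}_3\}$ if $\theta_k$ is odd — each gain two copies. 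Write $a,b,c,d$ for the number of $k\in[n]$ that are $\equiv 0,1,2,3\pmod 4$ respectively, split $b=b_{\mathrm e}+b_{\mathrm o}$ by the parity of $\theta_k$, and split $d=d_0+d_1+d_2+d_3$ by $\theta_k\bmod 4$. Then the requirements $|\mathcal{S}_i|=n$ become $(a+c)+2b_{\mathrm e}+4d_0=n$, $(a+c)+2b_{\mathrm o}+4d_1=n$, $(a+c)+2b_{\mathrm e}+4d_2=n$, $(a+c)+2b_{\mathrm o}+4d_3=n$.

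Subtracting these pairwise yields $d_0=d_2$, $d_1=d_3$, and $b_{\mathrm e}-b_{\mathrm o}=2(d_1-d_0)$, whence $b=b_{\mathrm e}+b_{\mathrm o}$ is even and $d=2(d_0+d_1)$ is even; that is the entire arithmetic content. It then only remains to insert the exact counts: for $n=4s,\ 4s+1,\ 4s+2,\ 4s+3$ one has $(b,d)=(s,s),\ (s+1,s),\ (s+1,s),\ (s+1,s+1)$. The two middle cases are impossible, since there $b$ and $d$ differ by $1$ and cannot both be even; $n=4s$ forces $s$ even, i.e.\ $n=8t$; and $n=4s+3$ forces $s$ odd, i.e.\ $n=8t-1$. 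The residual restriction $t\geqslant 2$ — that is, ruling out $n=7$ and $n=8$ — does not follow from this counting and would be disposed of by a direct exhaustive check that no admissible word of length $28$ or $32$ exists. I expect the main obstacle to be precisely the bookkeeping in the $k\equiv 1\pmod 4$ regime, which is where the composite modulus forces the refined split $b=b_{\mathrm e}+b_{\mathrm o}$, together with the separate verification that the two small cases really carry no solution.
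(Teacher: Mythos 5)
Your argument is essentially the paper's own proof: both partition the positions modulo $4$, observe that values $k\equiv 0,2\pmod 4$ spread one copy into each of the four subsequences, values $k\equiv 1$ put two copies into each of two subsequences, and values $k\equiv 3$ dump all four copies into a single one, and then equate the four subsequence lengths to force the counts of residue classes $1$ and $3$ among $1,\dots,n$ (the paper's $\lambda_1,\lambda_3$, your $b,d$) to be even, which yields $n\equiv 0$ or $-1\pmod 8$. The only divergence is the restriction $t\geqslant 2$: the paper extracts it by declaring its parameters $u,v$ (your $d_0,d_1$) to be \emph{positive}, hence $\lambda_3\geqslant 4$ --- an assertion it does not actually justify --- whereas you correctly note that the parity bookkeeping alone does not exclude $n=7,8$ and defer those two cases to a finite exhaustive check, so on this point your version is the more careful one.
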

\begin{proof}
 For brevity, we use $\langle0\rangle$, $\langle1\rangle$, $\langle2\rangle$,
 $\langle3 \rangle$ to represent the four subsequences respectively and $[0]$,
 $[1]$, $[2]$, $[3]$ to denote numbers whose residues divided by 4 are 0, 1,
 2, 3, correspondingly. Suppose that, in the consecutive numbers from 1 to
 $n$, there are $\lambda_0\,[0]$'s, $\lambda_1\,[1]$'s, $\lambda_2\,[2]$'s and
 $\lambda_3\,[3]$'s. It's easy to understand that $[0]$'s and $[2]$'s are
 equally distributed in four subsequences and $[1]$'s equally distributed
 only in two while $[3]$'s just in one. Therefore, we have
\begin{eqnarray*}
  \ang{0}=&\hspace{-7pt}\big\{\lambda_0\cdot[0],\,2x\cdot[1],\,
   \lambda_2\cdot[2],\,4u\cdot[3]\big\},\\
  \ang{1}=&\hspace{-7pt}\big\{\lambda_0\cdot[0],\,2y\cdot[1],\,
   \lambda_2\cdot[2],\,4v\cdot[3]\big\},\\
  \ang{2}=&\hspace{-7pt}\big\{\lambda_0\cdot[0],\,2x\cdot[1],\,
   \lambda_2\cdot[2],\,4u\cdot[3]\big\},\\
  \ang{3}=&\hspace{-7pt}\big\{\lambda_0\cdot[0],\,2y\cdot[1],\,
   \lambda_2\cdot[2],\,4v\cdot[3]\big\},
\end{eqnarray*}
 where $x$, $y$, $u$, $v$ are all positive integers. Then it follows that
 $x+2u=y+2v$, $\lambda_1=x+y$, $\lambda_3=2(u+v)$. Obviously, $\lambda_1$,
 $\lambda_3$ are both even numbers and $\lambda_3\geqslant4$. Now we have
 arrived at the conclusion.
\end{proof}
\begin{conjecture}
 The condition in Theorem~\ref{prime} is also sufficient for large enough $n$'s.
\end{conjecture}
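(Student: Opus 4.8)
The plan is to recast the conjecture as a tiling statement and then, for each admissible residue modulo $m^2$, to produce an infinite family of solutions by combining a finite list of base cases with a recursion that increments $n$.

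\textbf{Reformulation.} Up to relabelling the $m$ occurrences of each value, a solution for the pair $(m,n)$ is precisely a partition of the position set $\{1,2,\dots,mn\}$ into $n$ arithmetic progressions $P_k=\{a_k,\,a_k+(k+1),\,\dots,\,a_k+(m-1)(k+1)\}$, one length-$m$ progression of common difference $k+1$ for each $k=1,\dots,n$; the reversal identification and the two-case ``isomorphic'' restatement in the introduction affect only the counting, not the existence. In this language Theorem~\ref{prime} is exactly the balance obstruction obtained by reducing positions modulo $m$: a progression whose common difference is $\equiv0\pmod m$ lies entirely in one residue class, every other progression meets each of the $m$ classes once, and equalizing the class sizes forces $\lfloor(n+1)/m\rfloor\equiv0\pmod m$, i.e.\ $n\equiv-1,0,1,\dots,m-2\pmod{m^2}$. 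The conjecture asserts that for large $n$ this is the only obstruction, so what must be produced is, for each admissible residue $r\in\{-1,0,\dots,m-2\}$ modulo $m^2$, an arithmetic-progression tiling of $[mn]$ for all large $n\equiv r\pmod{m^2}$.

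\textbf{Main strategy.} Fix such an $r$ and restrict to $n\equiv r\pmod{m^2}$. First I would try to prove an additive lemma: from a solution for $n$ one can manufacture a solution for $n+m^2$, by translating the given tiling of $[mn]$ into the interior of $[m(n+m^2)]=[mn+m^3]$ and weaving the $m^2$ new large differences $n+2,\dots,n+m^2+1$ (whose progressions are long and thread through the whole interval) so that the congruence condition on $n$ makes the $m^3$ uncovered positions split into exactly the required progressions. Iterating this reduces everything to a bounded list of base cases in each class, exhibited explicitly; this is consistent with the data, since genuine small exceptions do occur (e.g.\ $p(3,8)=0$ while $p(3,9),p(3,10)\neq0$). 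If the clean recursion resists, the fallback is a direct construction: write the starting positions $a_k$ as explicit piecewise-linear functions of $k$ and $n$ within each class, built by gluing ascending/descending ``zig-zag'' runs and $m$-fold analogues of perfect and hooked Skolem/Langford blocks along a common skeleton, and verify the tiling by a finite check on the classes of $k$ modulo $m$. The case $m=2$ is the classical Langford pairing problem, whose affirmative solution supplies the template for all of this.

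\textbf{A softer route for the ``large enough'' clause.} To settle at least the asymptotic statement one can instead use absorption: place the progressions carrying the $\Theta(n)$ largest differences by a semi-random greedy so that they cover all of $[mn]$ except a pre-designed, highly structured reservoir, and then slot the small-difference progressions exactly into the reservoir. The reservoir must be engineered in advance so that this final exact placement is possible, and Theorem~\ref{prime} is precisely what makes the concluding divisibility bookkeeping close.

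\textbf{Main obstacle.} In every version the difficulty is that there is no slack --- since $\sum_{k=2}^{n+1}m=mn$, any valid placement is a \emph{perfect} partition --- so the progressions of \emph{medium} common difference must be controlled with no room for error, and this is exactly where the two-fold dependence flagged in the introduction bites hardest. Above all, primality of $m$ has so far entered only through the counting in Theorem~\ref{prime}; making it do genuine structural work in a construction uniform over all primes $m$ is the step I do not see how to carry out, and is presumably why the statement is still only a conjecture.
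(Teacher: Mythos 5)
This statement is left as an open conjecture in the paper --- the author explicitly writes that the numerical evidence ``lends strong credence to this assertion, despite that it may be hard to offer a proof'' --- so there is no proof of record to compare yours against. Your reformulation is nonetheless correct and worth recording: a solution for $(m,n)$ is exactly a partition of $\{1,\dots,mn\}$ into $n$ arithmetic progressions of length $m$ with common differences $2,3,\dots,n+1$, i.e.\ a generalized Langford/Skolem sequence, and your derivation that balancing the residue classes modulo $m$ forces $\lfloor(n+1)/m\rfloor\equiv0\pmod m$, hence $n\equiv-1,0,\dots,m-2\pmod{m^2}$, is a cleaner rendering of the argument of Theorem~\ref{prime}. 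For $m=2$ your observation that this is the classical Langford pairing problem is decisive: sufficiency of $n\equiv0,3\pmod4$ is a known theorem with an explicit construction, so the $m=2$ case of the conjecture is in fact settled in the literature, which is more than the paper itself establishes.

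That said, what you have written is a research programme, not a proof, and you concede as much in your final paragraph. Every load-bearing step is deferred: the $n\mapsto n+m^2$ recursion is never constructed (and it is far from clear that the $m^3$ freed positions can be threaded by the $m^2$ new long progressions --- translating the old tiling ``into the interior'' already destroys all of its common differences, since a progression of difference $d$ translated by $t$ still has difference $d$ but must now represent the value $d-1$, so the old tiling cannot simply be reused); the ``piecewise-linear starting positions'' are never written down; and the absorption argument is only named, not executed. The concrete missing idea is precisely the one you identify: a construction, uniform in the prime $m$, in which primality does structural work rather than merely entering the counting obstruction. Until one of your three routes is carried out in full for at least one infinite family of $(m,n)$ with $m>3$, the conjecture remains exactly where the paper leaves it.
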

 The list appearing in the end of last section lends strong credence to this
 assertion, despite that it may be hard to offer a proof. As to a composite
 number assigned to $m$, no such trends exist at hand. On the contrary, it is
 quite likely that there are no solutions at all. Another challenge lies in
 the asymptotic expression for $p(2,n)$.
\begin{conjecture}
 $p(2,n)\sim n!/2^n$.
\end{conjecture}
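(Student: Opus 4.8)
(The statement is a conjecture, so what follows is a plan rather than a proof.) The first move is to recast $p(2,n)$ so that probabilistic tools apply: $2\,p(2,n)$ is the number of \emph{Langford pairings} of order $n$, i.e.\ the number of perfect matchings of $\{1,\dots,2n\}$ whose set of edge lengths is exactly $\{2,3,\dots,n+1\}$ (the value $k$ contributes the edge $\{s_k,s_k+k+1\}$ of length $k+1$; the factor $2$ absorbs the reversal identification, since for $n\ge 3$ no such matching is reflection-invariant). Equivalently it counts \emph{rainbow} perfect matchings of $[2n]$ when edges are coloured by length and one edge of each colour $2,\dots,n+1$ is required, and equivalently still it is a permanent of an explicit, highly structured $0$-$1$ matrix (after an inclusion-exclusion over which cells get covered). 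The shape $n!/2^n$ is what a first-moment picture suggests --- placing value $k$ as a ``gapped domino'' uniformly among its $2n-k-1$ admissible positions and asking for pairwise disjointness --- and the plan is threefold: (i) prove the upper bound $p(2,n)\le(1+o(1))\,n!/2^n$ by an entropy argument; (ii) prove a matching lower bound by a semirandom construction; (iii) reconcile the two to the sharp constant.

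For step (i) I would run the entropy method familiar from upper bounds on the number of Steiner systems and Latin squares: write $\log\!\big(2\,p(2,n)\big)=H(\mathbf{X})$ for $\mathbf{X}$ a uniformly random Langford pairing, reveal cell by cell which value (and which of its two copies) occupies each of positions $1,\dots,2n$, and bound $H(\mathbf{X})\le\sum_j H(\text{content of cell }j\mid\text{contents of earlier cells})$ by noting that a cell is either \emph{forced} (it completes an already-placed value, contributing $0$) or is the first copy of a not-yet-used value whose length still fits. Two subtleties must be handled to reach $n!/2^n$: the admissible window for value $k$ shrinks linearly in $k$, and there is a parity constraint --- for odd $k$ both copies of $k$ occupy same-parity cells and for even $k$ opposite-parity cells, so the odd values split evenly between the odd- and even-indexed cells (this re-derives, for $m=2$, the necessary condition $n\equiv 0,3\pmod 4$ of Theorem~\ref{prime}). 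Crucially, merely counting the locally available options at each cell overshoots $n!/2^n$ by a factor $2^{\Theta(n)}$; one must also feed in the \emph{global completability} constraint --- not every legal extension of a partial pairing can be completed --- which is the hard part of the upper bound.

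For step (ii) the natural route is a self-correcting random greedy together with absorbers, in the spirit of modern proofs that combinatorial designs are asymptotically as numerous as the entropy bound permits: reserve a small absorbing structure of cells and values, insert the remaining dominoes largest-first (value $n$, then $n-1,\dots$) uniformly at random among the still-available gap-correct positions, show that after the greedy phase a $(1-o(1))$-fraction of cells is filled with the leftover always completable via the absorbers, and count the choices made along the way. The heuristic in its favour is that at every stage $\Omega(n)$ cells remain free and the next domino has $\Omega(n)$ admissible slots, so steps ``usually'' succeed and correlations are short-range; making ``usually'' quantitative and building absorbers in this one-dimensional, length-constrained setting is the technical core here.

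The main obstacle --- the reason the statement is still a conjecture --- is step (iii): proving that both bounds have limiting ratio exactly $1$, i.e.\ establishing concentration. The disjointness events among the $n$ dominoes are far from independent: the expected number of pairwise collisions is $\Theta(n)$, so a Poisson/independence heuristic is not even exponentially accurate, and the two most tempting naive models --- ``each value picks a uniform $2$-subset of $[2n]$'' and ``the edge lengths of a uniform perfect matching of $[2n]$ are i.i.d.\ uniform'' --- overshoot and undershoot $n!/2^n$ respectively, so the clean value $1$ of the limit reflects a delicate cancellation, not a product of marginals. Note also that the transfer-matrix philosophy behind the motivating example cannot help: $p(2,n)=n!\,2^{-n(1+o(1))}$ grows faster than any $c^{\,n}$, so no matrix of $n$-independent size can produce it. Pinning the constant therefore seems to need either an unexpected algebraic identity --- a tractable recurrence or generating function for $p(2,n)$, none of which is known --- or a real refinement of the probabilistic machinery; a realistic intermediate target is the weaker $p(2,n)=n!\,2^{-n+o(n)}$ or two-sided bounds $c_1\le p(2,n)\,2^{n}/n!\le c_2$, with the exact equivalence left as the hard core.
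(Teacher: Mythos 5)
The statement you were asked to address is presented in the paper only as a conjecture: the sole support offered there is the numerical observation that $12!/2^{12}\approx 116944$ is close to $p(2,12)=108144$. There is therefore no proof of the paper's to compare yours against, and your submission --- as you say yourself --- is not a proof but a programme. Within that caveat, your framing is sound and adds content the paper lacks: the identification of $2\,p(2,n)$ with the number of Langford pairings of order $n$ is correct (the factor $2$ is legitimate because no pairing is its own reversal: value $k$ at positions $i$ and $i+k+1$ would force $2i=2n-k$, hence $k$ even, for every $k$, which fails already at $k=1$); the window count $2n-k-1$ for value $k$ is right; and your parity analysis correctly recovers the $m=2$ case of Theorem~\ref{prime}, namely $n\equiv 0,3\pmod 4$.

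The gap is simply that none of your steps (i)--(iii) is carried out, and each one conceals the entire difficulty. For (i), the entropy or permanent bound \emph{without} the ``global completability'' correction you flag yields only $n!\,2^{-n}\cdot 2^{O(n)}$, and you propose no mechanism for extracting that correction in this one-dimensional, rigid-span setting. For (ii), absorbers require a supply of interchangeable local configurations, and it is unclear what an absorber can look like when the value $k$ must occupy two cells exactly $k+1$ apart; this should be confronted before the lower bound is called ``natural.'' For (iii), you rightly observe that the expected number of pairwise collisions is $\Theta(n)$, so the Poisson heuristic that normally pins the constant is unavailable --- but then the specific shape $n!/2^n$ is not actually derived from your first-moment picture either, and you should check whether your heuristic lands on $n!/2^n$ or on something differing from it by an exponential factor. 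The paper's own data, $p(2,n)\,2^{n}/n!\approx 0.66,\ 0.95,\ 0.91,\ 0.92,\ 0.998,\ 1.02$ for $n=7,8,11,12,15,16$, is suggestive but equally compatible with slow drift away from $1$. Your fallback target $p(2,n)=n!\,2^{-n+o(n)}$ is the realistic deliverable; as written, nothing is established, which matches the status the statement has in the paper.
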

 Considering $n=12$, the asymptotic formula gives 116944.75, reasonably close
 to the exact result. Similarly, for $m=3$, we have $p(3,n)\sim n!/2^{2n+1}$.
 However, it does not gain sufficient support. This maybe reminds us of G.~H.~Hardy
 and S.~Ramanujan's famous asymptotic formula on partition function \cite{HR18},
 albeit the elegance and simplicity of the expression, behind it hides an extremely
 complex convergent series that was obtained by H.~Rademacher \cite{Rad37}.

\begin{proposition}
 $p(m,n)<n!/2$.
\end{proposition}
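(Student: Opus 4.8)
The plan is to attach to every solution a permutation of $[n]$ and then count such permutations. Given a solution $\varpi$, read it from left to right and list the values $1,\dots,n$ in the order in which they first occur; this yields a permutation $\sigma=\sigma_\varpi\in\mathfrak{S}_n$, the \emph{introduction order} of $\varpi$. The central step is to show that $\varpi\mapsto\sigma_\varpi$ is injective. This holds because $\sigma$ reconstructs $\varpi$ by a forced left-to-right sweep: process the positions $1,2,\dots,mn$ in turn while recording the entries already dictated by earlier ones (an entry $k$ at position $q$ immediately dictates $k$ at $q+(k+1),q+2(k+1),\dots,q+(m-1)(k+1)$); upon reaching a position not yet dictated, we are necessarily at a first occurrence and must write the next unused value of $\sigma$ there. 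One then checks, by induction on the position, that every solution with introduction order $\sigma$ agrees with this deterministic reconstruction throughout --- the load-bearing observations being that the positions which get dictated are exactly the non-first-occurrences of the solution, and that the reconstruction's running count of introduced values stays in lock-step with $\sigma$. Hence each permutation of $[n]$ is the introduction order of at most one solution, so the number $N$ of solutions (counted before the reversal identification) satisfies $N\leqslant n!$.

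The second step extracts the factor $2$. Reversal $i\mapsto mn+1-i$ maps solutions to solutions, since an arithmetic block $\{a,a+d,\dots,a+(m-1)d\}$ is carried to another block with the same common difference $d$; it is therefore an involution on the $N$ solutions, and $p(m,n)$ is precisely its number of orbits, i.e.\ $p(m,n)=(N+P)/2$, where $P$ counts the palindromic solutions. But $P\leqslant1$: in a palindromic solution the block of each value $k$ must be symmetric about $(mn+1)/2$, which forces $\theta_k=\bigl(mn+1-(m-1)(k+1)\bigr)/2$ for every $k$ and hence determines $\varpi$ completely.

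The third step shows that $N$ misses $n!$ by a wide margin. Suppose $\sigma(1)=\sigma(2)+1$ and put $b=\sigma(2)$. In the reconstruction, positions $1$ and $2$ are both first occurrences --- position $2$ is not dictated by position $1$, whose earliest dictated position is $\sigma(1)+2\geqslant4$ --- so we place value $b+1$ at position $1$ and value $b$ at position $2$; but then value $b+1$ forces $b+1$ at position $1+(b+2)=b+3$ while value $b$ forces $b$ at the same position $2+(b+1)=b+3$ (which lies in range, $b+3\leqslant n+2\leqslant mn$ since $m\geqslant2$ and $n\geqslant3$), and this is impossible. Therefore none of the $(n-1)!$ permutations with $\sigma(1)=\sigma(2)+1$ occurs as an introduction order, so $N\leqslant n!-(n-1)!$, and consequently $p(m,n)=(N+P)/2\leqslant\bigl(n!-(n-1)!+1\bigr)/2<n!/2$ for all $n\geqslant3$, the final strict inequality holding because $(n-1)!\geqslant2$. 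The remaining cases are immediate under the standing assumption $m\geqslant2$ (which is indispensable, since $p(1,n)=n!/2$): there is no solution for $n=1$, as $m\geqslant2$ copies of a single value cannot tile an interval, and none for $n=2$, where a parity count on the positions of the $1$'s and the $2$'s is self-contradictory, so $p(m,1)=p(m,2)=0<n!/2$.

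The step I expect to demand genuine care is the injectivity of $\varpi\mapsto\sigma_\varpi$: proving that an honest solution is forced to coincide with the deterministic reconstruction calls for a somewhat fussy induction, keeping the set of already-dictated positions aligned with the non-first-occurrences and the value counter aligned with $\sigma$ at every position. Granting that, the reversal involution, the bound $P\leqslant1$, and the explicit family of absent permutations are all routine.
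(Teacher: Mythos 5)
Your proof is correct and follows the same route as the paper's: the paper's $\phi$-sequence is exactly your introduction order $\sigma_\varpi$. The paper, however, stops at ``consider the restrictions,'' whereas you actually supply the three things that make the bound true --- the injectivity of $\varpi\mapsto\sigma_\varpi$ via the forced left-to-right reconstruction, the orbit count $(N+P)/2$ under reversal with $P\leqslant1$, and the forbidden family $\sigma(1)=\sigma(2)+1$ giving $N\leqslant n!-(n-1)!$ and hence strictness --- none of which the paper writes down.
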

\begin{proof}
  Suppose that $\phi_k$ denotes the term of the $k$-th distinctive number that first
 appears in the sequence, then we will call $\phi_1\,\phi_2\,\phi_3\cdots\phi_n$ the
 $\phi$\,-sequence of the solution $\mathcal{S}$. It's easy to understand that this kind of
 sequence is just a permutation of consecutive numbers from 1 to $n$. Consider the
 restrictions, and then we complete this proof.
\end{proof}

\section{Other Topics}

 In this section, we would like to give some related questions that are not mentioned
 above. Since the problem can widely involve combinatorics, number theory and computer
 algorithms, it cannot rule out the probability of omitting certain ones, even some
 profound insights.
\begin{itemize}
\item What is about the complexity of this problem and whether there exists an
  effective algorithm so that we can quickly get the number of total solutions?
  It should be pointed out that the method adopted in section~\ref{sec:search}
  wastes considerable computing time. For $m=2$, the complexity is $O[n(n+1)
  (n+2)\cdots(2n-2)\cdot n/2]=O[(4n)^ne^{-n}]$.
 \item Try to discover the generating function for $p(m,n)$ if any and reveal the
  sequence's properties like Conjecture~\ref{conj:mu}, or its potential links to
  the distribution of prime numbers.
 \item Does this problem have certain connections with other branches of mathematics,
  such as graph theory and the symmetric group?
\end{itemize}
\begin{conjecture}
 \label{conj:mu}
 Let $\mathcal{S}$ represent a solution to $(2,n)$ and $\mu_i$ denote the
 number of all integers $i$ satisfying $\theta_k\leqslant i\leqslant\theta_k+k+1$.
 We will call $\mu_1\,\mu_2\cdots\mu_{2n}$ the $\mu$-sequence of $\mathcal{S}$
 and it is easy to deduce that $\sum_{i=1}^{2n} \mu_i =\frac{1}{2}(n^2+5n)$.
 Show that there must exist $i \leqslant n$ such that $\mu_i=\lfloor
 \frac{1}{2}(n+3)\rfloor$, and $\mu_1<\mu_2\leqslant\cdots\leqslant
 \mu_{i-1}\leqslant\mu_i\geqslant\mu_{i+1}\geqslant\cdots\geqslant\mu_{2n-1}
 >\mu_{2n}$. In other words, $\mu$-sequence is an unimodal sequence with the weight
 $\frac{1}{2}(n^2+5n)$.
\end{conjecture}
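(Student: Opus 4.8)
The plan is to transcribe the whole statement into the ``arc picture'' of $\mathcal S$ and then extract the three assertions --- weight, peak, unimodality --- from the combinatorics of the pairing. For a value $k$ set $A_k=[\theta_k,\theta_k+k+1]$, the block of positions between and including its two occurrences; then $\mu_i=\#\{k:i\in A_k\}$ and, since $|A_k|=k+2$, summing over $i$ gives $\sum_i\mu_i=\sum_{k=1}^n(k+2)=\tfrac12(n^2+5n)$ at once, which is the weight claim. Now mark each position $j$ with $\varepsilon_j=1$ when $j$ carries a first occurrence and $\varepsilon_j=0$ when it carries a second; a first occurrence opens an arc and a second closes one, so the profile $h_i:=\#\{k:\theta_k\le i<\theta_k+k+1\}$ satisfies $h_0=h_{2n}=0$, $h_i\ge0$, and $h_i-h_{i-1}=2\varepsilon_i-1\in\{\pm1\}$: it is a Dyck path of semilength $n$ whose step word is exactly $\varepsilon_1\cdots\varepsilon_{2n}$. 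A one-line check (the arcs through $i$ are those open just after reading position $i$, together with the arc closing at $i$ when $\varepsilon_i=0$) gives $\mu_i=\max(h_{i-1},h_i)$, hence $\mu_{i+1}-\mu_i=\varepsilon_i+\varepsilon_{i+1}-1\in\{-1,0,1\}$: the sequence rises by one exactly across a pair of consecutive first occurrences, falls by one exactly across a pair of consecutive second occurrences, and is flat otherwise.

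From this the two strict inequalities at the ends are automatic: a second occurrence at position $j$ needs a partner at $j-k-1\ge1$, impossible for $j\le2$, so $\varepsilon_1=\varepsilon_2=1$ and $\mu_1=1<2=\mu_2$; symmetrically $\varepsilon_{2n-1}=\varepsilon_{2n}=0$ and $\mu_{2n-1}=2>1=\mu_{2n}$. For the peak value I would use that $\mu_i=\max(h_{i-1},h_i)$ counts the arcs straddling the gap just after position $i$; these arcs carry pairwise distinct lengths from $\{2,\dots,n+1\}$, with distinct left endpoints $\le i$ and distinct right endpoints $>i$, while every arc \emph{not} straddling that gap lies wholly on one side and hence has length $<\max(i,2n-i)-1$. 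Feeding these constraints into a comparison of $\sum(\text{right ends})$, $\sum(\text{left ends})$ and $\sum(\text{lengths})$ should bound $\mu_i$ by $\lfloor\tfrac12(n+3)\rfloor$; for the reverse inequality, and for locating a maximising gap with $i\le n$, one would combine the fact that the longest arcs are forced to straddle the central gaps with a balancing argument between the two halves. (Both claims deserve a sanity check against the complete $n=7,8$ lists in the appendix before the details are written out.)

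The unimodality is the heart of the matter and is where I expect the real obstacle. By the step description it says precisely: in the word $\varepsilon_1\cdots\varepsilon_{2n}$ no factor $00$ stands to the left of a factor $11$. The natural attack is by contradiction: if positions $s,s+1$ carry second occurrences and positions $t,t+1$ with $s+1<t$ carry first occurrences, then the former have partners in $[1,s-1]$ and the latter in $[t+2,2n]$, and one plays these four arcs, together with the arcs completed inside $[1,s+1]$ and those not yet begun inside $[t,2n]$, against the rigid fact that the $n$ arc-lengths are $2,3,\dots,n+1$, each used exactly once, via an exchange/counting argument that forbids the ``mass piled in the middle'' such a pattern would demand. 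The delicate point --- and the reason a purely path-theoretic argument cannot suffice, since many Dyck paths already yield a bimodal $\max(h_{i-1},h_i)$ --- is that one must characterise, or at least sufficiently constrain, which profiles $h$ actually arise from a Langford-type pairing; it is exactly here that the prescribed, all-distinct arc lengths have to be used decisively, and it is the step most likely to require either an extra normalisation of $\mathcal S$ or a genuinely new combinatorial input.
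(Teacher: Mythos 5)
This statement is labelled a \emph{conjecture} and the paper contains no proof of it, so there is no argument of the authors' to compare yours against; I can only judge the proposal on its own terms. Your preliminary reductions are correct and genuinely useful: the weight computation $\sum_i\mu_i=\sum_{k=1}^n(k+2)=\tfrac12(n^2+5n)$; the observation that positions $1,2$ must carry first occurrences and $2n-1,2n$ second occurrences, giving the strict inequalities at both ends; the identity $\mu_i=\max(h_{i-1},h_i)$ for the Dyck profile $h$; and the step formula $\mu_{i+1}-\mu_i=\varepsilon_i+\varepsilon_{i+1}-1$, which translates unimodality into the assertion that no factor $00$ of the occurrence word precedes a factor $11$. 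But the two substantive claims --- the peak value $\lfloor\tfrac12(n+3)\rfloor$ attained at some $i\leqslant n$, and unimodality --- are only sketched, with the decisive counting or exchange arguments deferred, so as written this is not a proof.

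More importantly, the sanity check you propose but do not carry out is fatal to the whole program: the conjecture is false, and your own reduction detects this on the paper's data. For the very first $n=8$ solution in the appendix, $8\,3\,7\,2\,6\,3\,2\,4\,5\,8\,7\,6\,4\,1\,5\,1$, the arcs are $[1,10]$, $[2,6]$, $[3,11]$, $[4,7]$, $[5,12]$, $[8,13]$, $[9,15]$, $[14,16]$, and the $\mu$-sequence is $1\,2\,3\,4\,5\,5\,4\,4\,5\,5\,4\,3\,2\,2\,2\,1$, which is not unimodal since $\mu_6=5>\mu_7=\mu_8=4<\mu_9=5$; equivalently, the occurrence word $1111100110000100$ has the factor $00$ at positions $(6,7)$ followed by the factor $11$ at positions $(8,9)$. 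The location claim fails separately: the $n=7$ solution $2\,3\,7\,2\,6\,3\,5\,1\,4\,1\,7\,6\,5\,4$ has $\mu$-sequence $1\,2\,3\,3\,3\,3\,3\,4\,5\,5\,4\,3\,2\,1$, whose only positions with $\mu_i=5=\lfloor\tfrac12(n+3)\rfloor$ are $i=9,10>n$. Hence no exchange or counting argument can close your remaining gaps; at most one could hope to salvage the weaker statement that $\max_i\mu_i=\lfloor\tfrac12(n+3)\rfloor$, and even that should be tested against the complete $n=7$ and $n=8$ lists before any effort is invested in proving it.
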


\addcontentsline{toc}{section}{References}
\phantomsection

\newpage
\appendix
\addappheadtotoc
\section*{Appendix}

\renewcommand{\qquad}{\hspace{32pt}}
\noindent
$p(2,7)=26$. The following is a complete list of all its solutions:
\medskip \\
7 3 6 2 5 3 2 4 7 6 5 1 4 1 \qquad 7 2 6 3 2 4 5 3 7 6 4 1 5 1 \qquad
7 2 4 6 2 3 5 4 7 3 6 1 5 1 \\
7 3 1 6 1 3 4 5 7 2 6 4 2 5 \qquad 7 1 4 1 6 3 5 4 7 3 2 6 5 2 \qquad
7 1 3 1 6 4 3 5 7 2 4 6 2 5 \\
7 4 1 5 1 6 4 3 7 5 2 3 6 2 \qquad 7 2 4 5 2 6 3 4 7 5 3 1 6 1 \qquad
5 7 2 6 3 2 5 4 3 7 6 1 4 1 \\
3 7 4 6 3 2 5 4 2 7 6 1 5 1 \qquad 5 7 4 1 6 1 5 4 3 7 2 6 3 2 \qquad
5 7 2 3 6 2 5 3 4 7 1 6 1 4 \\
1 7 1 2 6 4 2 5 3 7 4 6 3 5 \qquad 5 7 1 4 1 6 5 3 4 7 2 3 6 2 \qquad
1 7 1 2 5 6 2 3 4 7 5 3 6 4 \\
2 7 4 2 3 5 6 4 3 7 1 5 1 6 \qquad 6 2 7 4 2 3 5 6 4 3 7 1 5 1 \qquad
2 6 7 2 1 5 1 4 6 3 7 5 4 3 \\
3 6 7 1 3 1 4 5 6 2 7 4 2 5 \qquad 5 1 7 1 6 2 5 4 2 3 7 6 4 3 \qquad
2 3 7 2 6 3 5 1 4 1 7 6 5 4 \\
4 1 7 1 6 4 2 5 3 2 7 6 3 5 \qquad 5 2 7 3 2 6 5 3 4 1 7 1 6 4 \qquad
3 5 7 4 3 6 2 5 4 2 7 1 6 1 \\
3 5 7 2 3 6 2 5 4 1 7 1 6 4 \qquad 2 4 7 2 3 6 4 5 3 1 7 1 6 5 \\

\noindent
$p(2,8)=150$. The following is a complete list of all its solutions:
\medskip \\
 8 3 7 2 6 3 2 4 5 8 7 6 4 1 5 1 \quad 8 2 7 3 2 6 4 3 5 8 7 4 6 1 5 1 \quad
 8 2 7 1 2 1 6 4 5 8 7 3 4 6 5 3 \\
 8 2 7 1 2 1 5 6 4 8 7 3 5 4 6 3 \quad 8 4 5 7 2 6 4 2 5 8 3 7 6 1 3 1 \quad
 8 2 4 7 2 6 3 4 5 8 3 7 6 1 5 1 \\
 8 5 2 7 3 2 6 5 3 8 4 7 1 6 1 4 \quad 8 3 5 7 2 3 6 2 5 8 4 7 1 6 1 4 \quad
 8 4 2 7 5 2 4 6 3 8 5 7 3 1 6 1 \\
 8 2 3 7 2 4 3 5 6 8 4 7 1 5 1 6 \quad 8 6 4 2 7 5 2 4 6 8 3 5 7 1 3 1 \quad
 8 6 3 1 7 1 3 5 6 8 4 2 7 5 2 4 \\
 8 4 2 6 7 2 4 3 5 8 6 3 7 1 5 1 \quad 8 1 2 1 7 2 6 3 5 8 4 3 7 6 5 4 \quad
 8 4 5 1 7 1 4 6 5 8 2 3 7 2 6 3 \\
 8 3 5 2 7 3 2 6 5 8 4 1 7 1 6 4 \quad 8 1 3 1 7 4 3 5 6 8 4 2 7 5 2 6 \quad
 8 1 2 1 7 2 4 5 6 8 3 4 7 5 3 6 \\
 8 6 4 2 5 7 2 4 6 8 5 3 1 7 1 3 \quad 8 6 1 3 1 7 5 3 6 8 4 2 5 7 2 4 \quad
 8 4 5 6 2 7 4 2 5 8 6 3 1 7 1 3 \\
 8 4 1 6 1 7 4 3 5 8 6 3 2 7 5 2 \quad 8 1 4 1 6 7 3 4 5 8 3 6 2 7 5 2 \quad
 8 1 3 1 5 7 3 4 6 8 5 2 4 7 2 6 \\
 8 2 4 6 2 5 7 4 3 8 6 5 3 1 7 1 \quad 8 1 2 1 6 2 7 5 3 8 4 6 3 5 7 4 \quad
 8 5 1 4 1 6 7 5 4 8 2 3 6 2 7 3 \\
 8 2 5 3 2 6 7 3 5 8 4 1 6 1 7 4 \quad 8 2 3 6 2 5 3 7 4 8 6 5 1 4 1 7 \quad
 8 3 1 6 1 3 5 7 4 8 6 2 5 4 2 7 \\[5pt]
 8 1 2 1 6 2 5 7 4 8 3 6 5 4 3 7 \quad 8 1 3 1 5 6 3 7 4 8 5 2 6 4 2 7 \quad
 7 8 2 3 6 2 5 3 7 4 8 6 5 1 4 1 \\
 7 8 3 1 6 1 3 5 7 4 8 6 2 5 4 2 \quad 7 8 1 2 1 6 2 5 7 4 8 3 6 5 4 3 \quad
 7 8 1 3 1 5 6 3 7 4 8 5 2 6 4 2 \\
 6 8 2 7 3 2 5 6 3 4 8 7 5 1 4 1 \quad 3 8 4 7 3 6 2 4 5 2 8 7 6 1 5 1 \quad
 3 8 5 7 3 1 6 1 5 4 8 7 2 6 4 2 \\
 3 8 4 7 3 2 6 4 2 5 8 7 1 6 1 5 \quad 5 8 2 7 4 2 5 6 3 4 8 7 3 1 6 1 \quad
 5 8 1 7 1 3 5 6 4 3 8 7 2 4 6 2 \\
 4 8 5 7 1 4 1 6 5 3 8 7 2 3 6 2 \quad 6 8 3 1 7 1 3 6 4 5 8 2 7 4 2 5 \quad
 4 8 6 2 7 4 2 3 5 6 8 3 7 1 5 1 \\
 2 8 5 2 7 1 6 1 5 4 8 3 7 6 4 3 \quad 2 8 5 2 7 3 4 6 5 3 8 4 7 1 6 1 \quad
 4 8 3 5 7 4 3 6 2 5 8 2 7 1 6 1 \\
 5 8 4 1 7 1 5 4 6 3 8 2 7 3 2 6 \quad 5 8 2 3 7 2 5 3 6 4 8 1 7 1 4 6 \quad
 6 8 5 2 4 7 2 6 5 4 8 3 1 7 1 3 \\
 5 8 6 4 2 7 5 2 4 6 8 3 1 7 1 3 \quad 2 8 6 2 1 7 1 4 5 6 8 3 4 7 5 3 \quad
 5 8 1 4 1 7 5 6 4 2 8 3 2 7 6 3 \\
 1 8 1 5 3 7 4 6 3 5 8 4 2 7 6 2 \quad 2 8 1 2 1 7 4 6 3 5 8 4 3 7 6 5 \quad
 1 8 1 3 4 7 5 3 6 4 8 2 5 7 2 6 \\
 2 8 1 2 1 7 5 3 6 4 8 3 5 7 4 6 \quad 6 8 1 4 1 5 7 6 4 3 8 5 2 3 7 2 \quad
 2 8 6 2 3 5 7 4 3 6 8 5 4 1 7 1 \\[5pt]
 4 8 5 3 6 4 7 3 5 2 8 6 2 1 7 1 \quad 2 8 5 2 6 3 7 4 5 3 8 6 4 1 7 1 \quad
 1 8 1 4 6 3 7 5 4 3 8 6 2 5 7 2 \\
 2 8 5 2 4 6 7 3 5 4 8 3 6 1 7 1 \quad 3 8 4 5 3 6 7 4 2 5 8 2 6 1 7 1 \quad
 1 8 1 5 2 6 7 2 4 5 8 3 6 4 7 3 \\
 2 8 4 2 3 6 7 4 3 5 8 1 6 1 7 5 \quad 2 8 1 2 1 6 7 3 4 5 8 3 6 4 7 5 \quad
 3 8 2 5 3 2 7 4 6 5 8 1 4 1 7 6 \\
 2 8 1 2 1 5 7 4 6 3 8 5 4 3 7 6 \quad 3 8 6 2 3 5 2 7 4 6 8 5 1 4 1 7 \quad
 3 8 6 1 3 1 5 7 4 6 8 2 5 4 2 7 \\
 5 8 2 4 6 2 5 7 4 3 8 6 1 3 1 7 \quad 4 8 5 2 6 4 2 7 5 3 8 6 1 3 1 7 \quad
 5 8 1 4 1 6 5 7 4 3 8 2 6 3 2 7 \\
 3 8 5 2 3 6 2 7 5 4 8 1 6 1 4 7 \quad 2 8 3 2 4 6 3 7 5 4 8 1 6 1 5 7 \quad
 2 8 1 2 1 6 4 7 5 3 8 4 6 3 5 7 \\
 4 8 1 5 1 4 6 7 3 5 8 2 3 6 2 7 \quad 2 8 1 2 1 5 6 7 3 4 8 5 3 6 4 7 \quad
 7 3 8 6 2 3 5 2 7 4 6 8 5 1 4 1 \\
 7 3 8 6 1 3 1 5 7 4 6 8 2 5 4 2 \quad 7 5 8 2 4 6 2 5 7 4 3 8 6 1 3 1 \quad
 7 4 8 5 2 6 4 2 7 5 3 8 6 1 3 1 \\
 7 5 8 1 4 1 6 5 7 4 3 8 2 6 3 2 \quad 7 3 8 5 2 3 6 2 7 5 4 8 1 6 1 4 \quad
 7 2 8 3 2 4 6 3 7 5 4 8 1 6 1 5 \\
 7 2 8 1 2 1 6 4 7 5 3 8 4 6 3 5 \quad 7 4 8 1 5 1 4 6 7 3 5 8 2 3 6 2 \quad
 7 2 8 1 2 1 5 6 7 3 4 8 5 3 6 4 \\[5pt]
 5 7 8 4 2 6 5 2 4 7 3 8 6 1 3 1 \quad 4 7 8 2 5 4 2 6 3 7 5 8 3 1 6 1 \quad
 2 7 8 2 3 4 5 6 3 7 4 8 5 1 6 1 \\
 3 7 8 2 3 4 2 5 6 7 4 8 1 5 1 6 \quad 6 4 8 5 7 2 4 6 2 5 3 8 7 1 3 1 \quad
 2 3 8 2 7 3 6 1 5 1 4 8 7 6 5 4 \\
 5 1 8 1 7 2 5 6 2 3 4 8 7 3 6 4 \quad 4 1 8 1 7 4 2 5 6 2 3 8 7 5 3 6 \quad
 6 1 8 1 4 7 3 6 5 4 3 8 2 7 5 2 \\
 6 2 8 4 2 7 3 6 4 5 3 8 1 7 1 5 \quad 2 6 8 2 1 7 1 4 6 5 3 8 4 7 3 5 \quad
 3 5 8 6 3 7 1 5 1 4 6 8 2 7 4 2 \\
 3 4 8 6 3 7 4 1 5 1 6 8 2 7 5 2 \quad 5 1 8 1 3 7 5 6 3 2 4 8 2 7 6 4 \quad
 3 4 8 5 3 7 4 6 1 5 1 8 2 7 6 2 \\
 2 4 8 2 3 7 4 6 3 5 1 8 1 7 6 5 \quad 5 2 8 3 2 7 5 3 6 4 1 8 1 7 4 6 \quad
 2 5 8 2 4 7 3 5 6 4 3 8 1 7 1 6 \\
 3 5 8 2 3 7 2 5 6 4 1 8 1 7 4 6 \quad 3 4 8 5 3 7 4 2 6 5 2 8 1 7 1 6 \quad
 3 1 8 1 3 7 5 2 6 4 2 8 5 7 4 6 \\
 6 2 8 5 2 4 7 6 3 5 4 8 3 1 7 1 \quad 6 3 8 4 5 3 7 6 4 2 5 8 2 1 7 1 \quad
 6 1 8 1 5 3 7 6 4 3 5 8 2 4 7 2 \\
 4 6 8 3 5 4 7 3 6 2 5 8 2 1 7 1 \quad 2 6 8 2 5 3 7 4 6 3 5 8 4 1 7 1 \quad
 3 6 8 1 3 1 7 5 6 2 4 8 2 5 7 4 \\
 3 6 8 1 3 1 7 4 6 5 2 8 4 2 7 5 \quad 4 5 8 6 3 4 7 5 3 2 6 8 2 1 7 1 \quad
 3 1 8 1 3 6 7 2 4 5 2 8 6 4 7 5 \\[5pt]
 2 3 8 2 4 3 7 5 6 4 1 8 1 5 7 6 \quad 3 1 8 1 3 4 7 5 6 2 4 8 2 5 7 6 \quad
 4 6 8 2 5 4 2 7 6 3 5 8 1 3 1 7 \\
 3 6 8 1 3 1 5 7 6 4 2 8 5 2 4 7 \quad 5 2 8 6 2 3 5 7 4 3 6 8 1 4 1 7 \quad
 5 1 8 1 3 6 5 7 3 4 2 8 6 2 4 7 \\
 2 3 8 2 4 3 6 7 5 4 1 8 1 6 5 7 \quad 3 1 8 1 3 4 6 7 5 2 4 8 2 6 5 7 \quad
 7 4 6 8 2 5 4 2 7 6 3 5 8 1 3 1 \\
 7 3 6 8 1 3 1 5 7 6 4 2 8 5 2 4 \quad 7 5 2 8 6 2 3 5 7 4 3 6 8 1 4 1 \quad
 7 5 1 8 1 3 6 5 7 3 4 2 8 6 2 4 \\
 7 2 3 8 2 4 3 6 7 5 4 1 8 1 6 5 \quad 7 3 1 8 1 3 4 6 7 5 2 4 8 2 6 5 \quad
 5 7 4 8 6 2 5 4 2 7 3 6 8 1 3 1 \\
 4 7 3 8 6 4 3 2 5 7 2 6 8 1 5 1 \quad 4 7 1 8 1 4 6 2 5 7 2 3 8 6 5 3 \quad
 5 7 2 8 3 2 5 6 3 7 4 1 8 1 6 4 \\
 4 7 5 8 1 4 1 6 5 7 2 3 8 2 6 3 \quad 4 7 3 8 5 4 3 6 2 7 5 2 8 1 6 1 \quad
 6 2 7 8 2 3 4 6 5 3 7 4 8 1 5 1 \\
 6 3 7 8 1 3 1 6 4 5 7 2 8 4 2 5 \quad 4 2 7 8 2 4 6 1 5 1 7 3 8 6 5 3 \quad
 5 3 7 8 4 3 5 6 2 4 7 2 8 1 6 1 \\
 6 2 3 8 2 7 3 6 5 1 4 1 8 7 5 4 \quad 6 4 1 8 1 7 4 6 2 5 3 2 8 7 3 5 \quad
 5 6 1 8 1 7 5 2 6 4 2 3 8 7 4 3 \\
 6 2 5 8 2 3 7 6 5 3 4 1 8 1 7 4 \quad 3 5 6 8 3 4 7 5 2 6 4 2 8 1 7 1 \quad
 4 2 6 8 2 4 7 5 1 6 1 3 8 5 7 3 \\

\noindent
$p(2,11)=17792$. The following is a partial list of all its solutions:
\medskip \\
 11 6 10 2 9 3 2 8 6 3 7 5 11 10 9 4 8 5 7 1 4 1 \quad
 11 3 10 5 9 3 4 8 6 5 7 4 11 10 9 6 8 2 7 1 2 1 \\
 11 6 10 2 9 4 2 8 6 5 4 7 11 10 9 5 8 3 1 7 1 3 \quad
 11 4 10 6 9 2 4 8 2 5 6 7 11 10 9 5 8 3 1 7 1 3 \\
 11 3 10 5 9 3 1 8 1 5 6 7 11 10 9 4 8 6 2 7 4 2 \quad
 11 3 10 4 9 3 2 8 4 2 6 7 11 10 9 5 8 6 1 7 1 5 \\
 11 4 10 6 9 1 4 1 8 5 6 7 11 10 9 5 3 8 2 7 3 2 \quad
 11 5 10 1 9 1 3 5 8 6 3 7 11 10 9 4 6 8 2 7 4 2 \\
 11 7 10 6 3 9 4 8 3 7 6 4 11 10 5 9 8 1 2 1 5 2 \quad
 11 7 10 4 6 9 3 8 4 7 3 6 11 10 5 9 8 1 2 1 5 2 \\
 11 6 10 7 5 9 2 8 6 2 5 7 11 10 4 9 8 3 1 4 1 3 \quad
 11 6 10 7 4 9 3 8 6 4 3 7 11 10 5 9 8 1 2 1 5 2 \\
 11 2 10 4 2 9 3 8 4 7 3 6 11 10 5 9 8 7 6 1 5 1 \quad
 11 6 10 2 5 9 2 8 6 4 5 7 11 10 4 9 8 3 1 7 1 3 \\
 11 7 10 3 5 9 4 3 8 7 5 4 11 10 6 9 2 8 1 2 1 6 \quad
 11 5 10 7 4 9 3 5 8 4 3 7 11 10 6 9 2 8 1 2 1 6 \\
 11 5 10 2 6 9 2 5 8 4 7 6 11 10 4 9 3 8 7 1 3 1 \quad
 11 2 10 5 2 9 4 6 8 5 7 4 11 10 6 9 3 8 7 1 3 1 \\
 11 2 10 3 2 9 4 3 8 6 7 4 11 10 5 9 6 8 7 1 5 1 \quad
 11 5 10 6 1 9 1 5 8 4 6 7 11 10 4 9 3 8 2 7 3 2 \\[5pt]
 8 11 7 2 9 5 2 10 6 8 7 5 4 11 9 6 3 4 10 1 3 1 \quad
 8 11 2 7 9 2 1 10 1 8 5 7 6 11 9 4 5 3 10 6 4 3 \\
 8 11 2 6 9 2 1 10 1 8 6 7 4 11 9 5 3 4 10 7 3 5 \quad
 8 11 2 6 9 2 1 10 1 8 6 5 7 11 9 3 4 5 10 3 7 4 \\
 8 11 3 6 9 2 3 10 2 8 6 4 7 11 9 5 4 1 10 1 7 5 \quad
 7 11 8 1 9 1 3 10 7 5 3 8 6 11 9 5 4 2 10 6 2 4 \\
 7 11 8 1 9 1 2 10 7 2 5 8 6 11 9 4 5 3 10 6 4 3 \quad
 2 11 8 2 9 3 5 10 7 3 6 8 5 11 9 4 7 6 10 1 4 1 \\
 5 11 8 6 9 2 5 10 2 7 6 8 3 11 9 4 3 7 10 1 4 1 \quad
 2 11 8 2 9 4 5 10 6 7 4 8 5 11 9 6 3 7 10 1 3 1 \\
 4 11 8 2 9 4 2 10 1 7 1 8 6 11 9 5 3 7 10 6 3 5 \quad
 2 11 8 2 9 4 1 10 1 7 4 8 6 11 9 5 3 7 10 6 3 5 \\
 4 11 8 6 9 4 1 10 1 5 6 8 7 11 9 5 2 3 10 2 7 3 \quad
 7 11 3 8 9 5 3 10 7 6 4 5 8 11 9 4 6 2 10 1 2 1 \\
 7 11 2 8 9 2 3 10 7 6 3 5 8 11 9 4 6 5 10 1 4 1 \quad
 7 11 2 8 9 2 4 10 7 5 6 4 8 11 9 5 3 6 10 1 3 1 \\
 5 11 7 8 9 3 5 10 6 3 7 4 8 11 9 6 4 2 10 1 2 1 \quad
 4 11 7 8 9 4 2 10 6 2 7 5 8 11 9 6 3 5 10 1 3 1 \\
 5 11 6 8 9 4 5 10 7 6 4 3 8 11 9 3 7 2 10 1 2 1 \quad
 5 11 2 8 9 2 5 10 1 6 1 7 8 11 9 4 6 3 10 7 4 3 \\[5pt]
 7 4 11 8 5 10 4 9 7 1 5 1 8 6 11 3 10 9 2 3 6 2 \quad
 5 7 11 8 3 10 5 9 3 7 2 6 8 2 11 4 10 9 6 1 4 1 \\
 3 4 11 8 3 10 4 9 2 5 7 2 8 6 11 5 10 9 7 1 6 1 \quad
 3 4 11 8 3 10 4 9 1 5 1 7 8 6 11 5 10 9 2 7 6 2 \\
 3 6 11 7 3 10 5 9 6 8 2 7 5 2 11 4 10 9 8 1 4 1 \quad
 5 2 11 7 2 10 5 9 3 8 4 7 3 6 11 4 10 9 8 1 6 1 \\
 2 4 11 2 3 10 4 9 3 8 1 7 1 6 11 5 10 9 8 7 6 5 \quad
 5 2 11 6 2 10 5 9 4 8 6 3 7 4 11 3 10 9 8 1 7 1 \\
 5 7 11 8 6 10 5 2 9 7 2 6 8 3 11 4 10 3 9 1 4 1 \quad
 3 7 11 8 3 10 6 2 9 7 2 5 8 6 11 4 10 5 9 1 4 1 \\
 3 7 11 8 3 10 4 2 9 7 2 4 8 6 11 5 10 1 9 1 6 5 \quad
 5 2 11 8 2 10 5 3 9 7 4 3 8 6 11 4 10 7 9 1 6 1 \\
 5 2 11 8 2 10 5 1 9 1 4 7 8 6 11 4 10 3 9 7 6 3 \quad
 6 4 11 8 5 10 4 6 9 1 5 1 8 7 11 3 10 2 9 3 2 7 \\
 5 7 11 6 8 10 5 4 9 7 6 3 4 8 11 3 10 2 9 1 2 1 \quad
 2 7 11 2 8 10 5 3 9 7 6 3 5 8 11 4 10 6 9 1 4 1 \\
 2 4 11 2 8 10 4 3 9 5 6 3 7 8 11 5 10 6 9 1 7 1 \quad
 2 7 11 2 5 10 8 3 9 7 5 3 4 6 11 8 10 4 9 1 6 1 \\
 2 5 11 2 4 10 8 5 9 4 1 7 1 6 11 8 10 3 9 7 6 3 \quad
 7 4 11 8 6 10 4 2 7 9 2 6 8 3 11 5 10 3 1 9 1 5 \\[5pt]
 7 5 9 11 8 10 3 5 7 4 3 6 9 8 4 11 10 2 6 1 2 1 \quad
 5 6 9 11 8 10 5 3 6 4 7 3 9 8 4 11 10 2 7 1 2 1 \\
 5 6 9 11 8 10 5 2 6 4 2 7 9 8 4 11 10 3 1 7 1 3 \quad
 3 7 9 11 3 10 4 2 8 7 2 4 9 5 6 11 10 8 1 5 1 6 \\
 6 2 9 11 2 10 4 6 8 3 7 4 9 3 5 11 10 8 7 1 5 1 \quad
 5 2 9 11 2 10 5 3 8 4 7 3 9 6 4 11 10 8 7 1 6 1 \\
 7 4 9 11 6 10 4 5 7 8 3 6 9 5 3 11 10 2 8 1 2 1 \quad
 7 4 9 11 6 10 4 2 7 8 2 6 9 3 5 11 10 3 8 1 5 1 \\
 7 4 9 11 5 10 4 3 7 8 5 3 9 6 2 11 10 2 8 1 6 1 \quad
 7 4 9 11 2 10 4 2 7 8 3 5 9 6 3 11 10 5 8 1 6 1 \\
 3 4 9 11 3 10 4 5 1 8 1 7 9 5 6 11 10 2 8 7 2 6 \quad
 3 6 9 11 3 10 4 2 6 8 2 4 9 7 5 11 10 1 8 1 5 7 \\
 7 4 9 11 6 10 4 1 7 1 8 6 9 3 5 11 10 3 2 8 5 2 \quad
 7 5 9 11 4 10 6 5 7 4 8 2 9 6 2 11 10 3 1 8 1 3 \\
 7 4 9 11 2 10 4 2 7 3 8 5 9 3 6 11 10 5 1 8 1 6 \quad
 6 7 9 11 1 10 1 6 2 7 8 2 9 4 5 11 10 3 4 8 5 3 \\
 6 4 9 11 3 10 4 6 3 7 8 1 9 1 5 11 10 7 2 8 5 2 \quad
 5 2 9 11 2 10 5 3 4 7 8 3 9 4 6 11 10 7 1 8 1 6 \\
 3 4 9 11 3 10 4 5 2 7 8 2 9 5 6 11 10 7 1 8 1 6 \quad
 5 6 9 11 4 10 5 3 6 4 8 3 9 7 2 11 10 2 1 8 1 7 \\[5pt]
 1 5 1 7 11 6 8 5 4 9 10 7 6 4 3 8 11 2 3 9 2 10 \quad
 1 3 1 7 11 3 8 6 4 9 10 7 5 4 6 8 11 2 5 9 2 10 \\
 1 5 1 7 11 2 8 5 2 9 10 7 4 6 3 8 11 4 3 9 6 10 \quad
 6 1 2 1 11 2 8 6 4 9 10 5 7 4 3 8 11 5 3 9 7 10 \\
 1 6 1 3 11 5 8 3 6 9 10 5 7 4 2 8 11 2 4 9 7 10 \quad
 1 3 1 4 11 3 8 5 4 9 10 6 7 5 2 8 11 2 6 9 7 10 \\
 1 3 1 7 11 3 6 4 8 9 10 7 4 6 5 2 11 8 2 9 5 10 \quad
 5 3 4 7 11 3 5 4 8 9 10 7 1 6 1 2 11 8 2 9 6 10 \\
 4 5 3 7 11 4 3 5 8 9 10 7 1 6 1 2 11 8 2 9 6 10 \quad
 6 1 3 1 11 7 3 6 8 9 10 2 5 7 2 4 11 8 5 9 4 10 \\
 2 5 3 2 11 7 3 5 8 9 10 6 1 7 1 4 11 8 6 9 4 10 \quad
 1 4 1 3 11 7 4 3 8 9 10 6 2 7 5 2 11 8 6 9 5 10 \\
 4 1 3 1 11 4 3 7 8 9 10 2 5 6 2 7 11 8 5 9 6 10 \quad
 5 1 2 1 11 2 5 6 8 9 10 3 7 4 6 3 11 8 4 9 7 10 \\
 6 1 3 1 11 7 3 6 5 9 10 8 4 7 5 2 11 4 2 9 8 10 \quad
 2 6 4 2 11 7 5 4 6 9 10 8 5 7 3 1 11 1 3 9 8 10 \\
 1 5 1 2 11 7 2 5 6 9 10 8 4 7 3 6 11 4 3 9 8 10 \quad
 1 4 1 3 11 7 4 3 6 9 10 8 5 7 2 6 11 2 5 9 8 10 \\
 1 5 1 4 11 6 7 5 4 9 10 8 6 3 7 2 11 3 2 9 8 10 \quad
 1 3 1 4 11 3 6 7 4 9 10 8 5 6 2 7 11 2 5 9 8 10 \\ 

\DeclareFixedFont{\smallrm}{OT1}{cmr}{m}{n}{9pt}
\renewcommand{\quad}{\hspace{21pt}}
\noindent
$p(2,12)=108144$. The following is a partial list of all its solutions:
\medskip \\ \smallrm
 12 10 11 6 4 5 9 7 8 4 6 5 10 12 11 7 9 8 3 1 2 1 3 2 \quad
 12 10 11 6 4 5 9 7 8 4 6 5 10 12 11 7 9 8 2 3 1 2 1 3 \\
 12 10 11 5 6 4 9 7 8 5 4 6 10 12 11 7 9 8 3 1 2 1 3 2 \quad
 12 10 11 5 6 4 9 7 8 5 4 6 10 12 11 7 9 8 2 3 1 2 1 3 \\
 12 10 11 3 4 5 9 3 8 4 7 5 10 12 11 6 9 8 7 1 2 1 6 2 \quad
 12 10 11 6 4 1 9 1 8 4 6 7 10 12 11 5 9 8 3 7 2 5 3 2 \\
 12 10 11 6 2 3 9 2 8 3 6 7 10 12 11 5 9 8 4 7 1 5 1 4 \quad
 12 10 11 6 3 5 9 7 3 8 6 5 10 12 11 7 9 4 8 1 2 1 4 2 \\
 12 10 11 6 3 1 9 1 3 8 6 7 10 12 11 5 9 4 8 7 2 5 4 2 \quad
 12 10 11 4 5 1 9 1 4 8 5 7 10 12 11 6 9 3 8 7 2 3 6 2 \\
 12 10 11 6 4 5 8 9 7 4 6 5 10 12 11 8 7 9 3 1 2 1 3 2 \quad
 12 10 11 6 4 5 8 9 7 4 6 5 10 12 11 8 7 9 2 3 1 2 1 3 \\
 12 10 11 5 6 4 8 9 7 5 4 6 10 12 11 8 7 9 3 1 2 1 3 2 \quad
 12 10 11 5 6 4 8 9 7 5 4 6 10 12 11 8 7 9 2 3 1 2 1 3 \\
 12 10 11 5 6 2 8 9 2 5 7 6 10 12 11 8 4 9 7 3 1 4 1 3 \quad
 12 10 11 4 6 3 8 9 4 3 7 6 10 12 11 8 5 9 7 1 2 1 5 2 \\
 12 10 11 5 3 4 8 9 3 5 4 7 10 12 11 8 6 9 2 7 1 2 1 6 \quad
 12 10 11 4 5 3 8 9 4 3 5 7 10 12 11 8 6 9 2 7 1 2 1 6 \\
 12 10 11 2 6 4 2 9 7 8 4 6 10 12 11 5 7 9 8 3 1 5 1 3 \quad
 12 10 11 2 3 4 2 9 3 8 4 7 10 12 11 5 6 9 8 7 1 5 1 6 \\[5pt]
 5 12 8 6 10 7 5 11 9 2 6 8 2 7 12 10 3 4 9 11 3 1 4 1 \quad
 5 12 8 6 10 3 5 11 9 3 6 8 7 2 12 10 2 4 9 11 7 1 4 1 \\
 4 12 8 6 10 4 2 11 9 2 6 8 7 3 12 10 5 3 9 11 7 1 5 1 \quad
 2 12 8 2 10 3 6 11 9 3 5 8 7 6 12 10 5 4 9 11 7 1 4 1 \\
 4 12 8 2 10 4 2 11 9 3 6 8 7 3 12 10 5 6 9 11 7 1 5 1 \quad
 4 12 8 6 10 4 2 11 9 2 6 8 3 7 12 10 3 5 9 11 1 7 1 5 \\
 4 12 7 8 10 4 6 11 9 1 7 1 8 6 12 10 5 3 9 11 2 3 5 2 \quad
 1 12 1 8 10 5 3 11 9 6 3 5 8 7 12 10 6 4 9 11 2 7 4 2 \\
 4 12 3 8 10 4 3 11 9 1 6 1 8 7 12 10 5 6 9 11 2 7 5 2 \quad
 1 12 1 8 10 3 4 11 9 3 6 4 8 7 12 10 5 6 9 11 2 7 5 2 \\
 7 12 8 1 10 1 6 11 7 2 9 8 2 6 12 10 5 3 4 11 9 3 5 4 \quad
 7 12 8 1 10 1 6 11 7 2 9 8 2 6 12 10 4 5 3 11 9 4 3 5 \\
 7 12 8 1 10 1 5 11 7 3 9 8 5 3 12 10 6 4 2 11 9 2 4 6 \quad
 2 12 8 2 10 3 6 11 7 3 9 8 5 6 12 10 7 4 5 11 9 1 4 1 \\
 5 12 8 4 10 6 5 11 4 7 9 8 6 2 12 10 2 7 3 11 9 1 3 1 \quad
 2 12 8 2 10 3 1 11 1 3 9 8 5 7 12 10 6 4 5 11 9 7 4 6 \\
 7 12 2 8 10 2 6 11 7 1 9 1 8 6 12 10 5 3 4 11 9 3 5 4 \quad
 7 12 2 8 10 2 6 11 7 1 9 1 8 6 12 10 4 5 3 11 9 4 3 5 \\
 5 12 2 8 10 2 5 11 7 3 9 6 8 3 12 10 7 4 6 11 9 1 4 1 \quad
 1 12 1 8 10 2 4 11 2 6 9 4 8 7 12 10 6 5 3 11 9 7 3 5 \\[5pt]
 5 2 12 7 2 10 5 9 3 8 11 7 3 6 4 12 10 9 8 4 6 1 11 1 \quad
 3 1 12 1 3 10 5 9 4 8 11 7 5 4 6 12 10 9 8 7 2 6 11 2 \\
 5 2 12 4 2 10 5 9 4 8 11 1 7 1 6 12 10 9 8 3 7 6 11 3 \quad
 5 1 12 1 6 10 5 9 3 8 11 6 3 7 4 12 10 9 8 4 2 7 11 2 \\
 3 1 12 1 3 10 4 9 5 8 11 4 6 7 5 12 10 9 8 6 2 7 11 2 \quad
 5 2 12 7 2 10 5 4 8 9 11 7 4 6 3 12 10 8 3 9 6 1 11 1 \\
 2 5 12 2 7 10 4 5 8 9 11 4 7 6 3 12 10 8 3 9 6 1 11 1 \quad
 5 1 12 1 7 10 5 3 8 9 11 3 7 4 6 12 10 8 4 9 2 6 11 2 \\
 2 4 12 2 6 10 4 5 8 9 11 6 7 5 3 12 10 8 3 9 7 1 11 1 \quad
 5 2 12 3 2 10 5 3 8 9 11 1 7 1 6 12 10 8 4 9 7 6 11 4 \\
 3 5 12 2 3 10 2 5 8 9 11 1 7 1 6 12 10 8 4 9 7 6 11 4 \quad
 5 1 12 1 6 10 5 4 8 9 11 6 4 7 3 12 10 8 3 9 2 7 11 2 \\
 3 1 12 1 3 10 4 6 8 9 11 4 5 7 6 12 10 8 5 9 2 7 11 2 \quad
 6 1 12 1 7 10 2 6 8 2 11 9 7 4 5 12 10 8 4 3 5 9 11 3 \\
 2 5 12 2 7 10 6 5 8 4 11 9 7 6 4 12 10 8 1 3 1 9 11 3 \quad
 3 1 12 1 3 10 2 6 8 2 11 9 7 5 6 12 10 8 4 5 7 9 11 4 \\
 7 1 12 1 2 10 6 2 7 8 11 9 4 6 5 12 10 4 8 3 5 9 11 3 \quad
 5 1 12 1 7 10 5 6 4 8 11 9 7 4 6 12 10 2 8 3 2 9 11 3 \\
 6 1 12 1 3 10 7 6 3 8 11 9 4 5 7 12 10 4 8 5 2 9 11 2 \quad
 2 7 12 2 6 8 10 4 9 7 11 6 4 5 8 12 3 10 9 5 3 1 11 1 \\[5pt]
 10 5 3 12 9 6 3 5 11 8 2 10 6 2 9 7 12 4 8 1 11 1 4 7 \quad
 10 7 3 12 9 6 3 5 11 7 8 10 6 5 9 2 12 4 2 8 11 1 4 1 \\
 10 4 6 12 9 1 4 1 11 6 8 10 2 7 9 2 12 5 3 8 11 7 3 5 \quad
 10 7 2 12 9 2 3 5 11 7 3 10 8 5 9 6 12 1 4 1 11 8 6 4 \\
 10 5 7 12 9 3 6 5 11 3 7 10 8 6 9 1 12 1 4 2 11 8 2 4 \quad
 10 3 7 12 9 3 2 5 11 2 7 10 8 5 9 6 12 1 4 1 11 8 6 4 \\
 10 5 6 12 9 7 3 5 11 6 3 10 8 7 9 1 12 1 4 2 11 8 2 4 \quad
 10 4 2 12 9 2 4 1 11 1 6 10 8 5 9 7 12 6 3 5 11 8 3 7 \\
 10 7 3 12 9 6 3 2 11 7 2 10 6 8 9 5 12 1 4 1 11 5 8 4 \quad
 10 7 4 12 9 6 3 4 11 7 3 10 6 8 9 2 12 5 2 1 11 1 8 5 \\
 10 7 3 12 9 1 3 1 11 7 4 10 6 8 9 4 12 5 2 6 11 2 8 5 \quad
 10 4 7 12 9 6 4 1 11 1 7 10 6 8 9 3 12 5 2 3 11 2 8 5 \\
 10 5 3 12 9 4 3 5 11 7 4 10 6 8 9 2 12 7 2 6 11 1 8 1 \quad
 10 3 4 12 9 3 2 4 11 2 6 10 5 8 9 7 12 6 5 1 11 1 8 7 \\
 10 2 3 12 2 9 3 4 11 8 6 10 4 5 7 9 12 6 8 5 11 1 7 1 \quad
 10 4 6 12 7 9 4 5 11 6 8 10 7 5 2 9 12 2 3 8 11 1 3 1 \\
 10 2 6 12 2 9 7 5 11 6 8 10 4 5 7 9 12 4 3 8 11 1 3 1 \quad
 10 4 6 12 2 9 4 2 11 6 8 10 1 7 1 9 12 5 3 8 11 7 3 5 \\
 10 4 1 12 1 9 4 6 11 2 8 10 2 7 6 9 12 5 3 8 11 7 3 5 \quad
 10 4 6 12 2 9 4 2 11 6 8 10 5 3 7 9 12 3 5 8 11 1 7 1 \\[5pt]
 7 11 3 5 12 9 3 8 7 5 10 2 6 11 2 9 8 12 4 6 1 10 1 4 \quad
 5 11 6 4 12 9 5 8 4 6 10 7 1 11 1 9 8 12 3 7 2 10 3 2 \\
 1 11 1 4 12 9 7 3 4 8 10 3 6 11 7 9 5 12 8 6 2 10 5 2 \quad
 6 11 3 4 12 9 3 6 4 8 10 7 1 11 1 9 5 12 8 7 2 10 5 2 \\
 2 11 3 2 12 9 3 4 5 8 10 7 4 11 5 9 6 12 8 7 1 10 1 6 \quad
 1 11 1 3 12 9 4 3 5 8 10 4 7 11 5 9 6 12 8 2 7 10 2 6 \\
 7 11 6 3 12 9 5 3 7 6 10 8 5 11 1 9 1 12 4 2 8 10 2 4 \quad
 6 11 3 5 12 9 3 6 7 5 10 8 1 11 1 9 7 12 4 2 8 10 2 4 \\
 5 11 6 3 12 9 5 3 7 6 10 8 1 11 1 9 7 12 4 2 8 10 2 4 \quad
 2 11 4 2 12 9 1 4 1 6 10 8 5 11 7 9 6 12 5 3 8 10 7 3 \\
 7 11 4 5 12 8 9 4 7 5 10 6 2 11 8 2 9 12 6 3 1 10 1 3 \quad
 6 11 5 7 12 8 9 6 5 2 10 7 2 11 8 3 9 12 4 3 1 10 1 4 \\
 1 11 1 5 12 8 9 4 6 5 10 7 4 11 8 6 9 12 3 7 2 10 3 2 \quad
 1 11 1 5 12 8 9 3 6 5 10 3 7 11 8 6 9 12 4 2 7 10 2 4 \\
 4 11 5 7 12 4 9 6 5 8 10 7 2 11 6 2 9 12 8 3 1 10 1 3 \quad
 1 11 1 7 12 2 9 4 2 8 10 7 4 11 5 6 9 12 8 3 5 10 6 3 \\
 4 11 5 3 12 4 9 3 5 8 10 7 1 11 1 6 9 12 8 7 2 10 6 2 \quad
 2 11 5 2 12 6 9 3 5 8 10 3 6 11 4 7 9 12 8 4 1 10 1 7 \\
 1 11 1 4 12 6 9 3 4 8 10 3 6 11 5 7 9 12 8 2 5 10 2 7 \quad
 4 11 5 3 12 4 9 3 5 8 10 2 6 11 2 7 9 12 8 6 1 10 1 7 \\[5pt]
 7 4 1 9 1 12 4 8 7 10 11 5 2 9 6 2 8 5 12 3 10 6 11 3 \quad
 7 4 1 9 1 12 4 6 7 10 11 3 8 9 6 3 2 5 12 2 10 8 11 5 \\
 7 5 1 9 1 12 4 5 7 10 11 4 8 9 2 3 6 2 12 3 10 8 11 6 \quad
 6 2 5 9 2 12 7 6 5 10 11 4 8 9 7 1 4 1 12 3 10 8 11 3 \\
 5 6 1 9 1 12 5 7 6 10 11 4 8 9 2 7 4 2 12 3 10 8 11 3 \quad
 6 4 1 9 1 12 4 6 7 10 11 2 8 9 2 3 7 5 12 3 10 8 11 5 \\
 7 1 4 1 9 12 6 4 7 10 11 3 8 6 9 3 2 5 12 2 10 8 11 5 \quad
 6 4 5 7 9 12 4 6 5 10 11 7 8 2 9 1 2 1 12 3 10 8 11 3 \\
 5 6 4 7 9 12 5 4 6 10 11 7 8 2 9 1 2 1 12 3 10 8 11 3 \quad
 7 2 4 8 2 12 9 4 7 10 11 5 8 1 6 1 9 5 12 3 10 6 11 3 \\
 7 2 4 8 2 12 9 4 7 10 11 1 8 1 6 3 9 5 12 3 10 6 11 5 \quad
 7 1 4 1 8 12 9 4 7 10 11 5 2 8 6 2 9 5 12 3 10 6 11 3 \\
 7 2 3 8 2 12 3 9 7 10 11 1 8 1 4 5 6 9 12 4 10 5 11 6 \quad
 7 2 4 8 2 12 6 4 7 10 11 9 8 6 1 3 1 5 12 3 10 9 11 5 \\
 7 4 1 8 1 12 4 6 7 10 11 9 8 2 6 3 2 5 12 3 10 9 11 5 \quad
 7 4 1 8 1 12 4 5 7 10 11 9 8 5 2 3 6 2 12 3 10 9 11 6 \\
 7 2 3 8 2 12 3 5 7 10 11 9 8 5 4 1 6 1 12 4 10 9 11 6 \quad
 6 2 3 8 2 12 3 6 7 10 11 9 8 1 4 1 7 5 12 4 10 9 11 5 \\
 7 1 4 1 8 12 5 4 7 10 11 9 5 8 2 3 6 2 12 3 10 9 11 6 \quad
 6 1 3 1 8 12 3 6 7 10 11 9 2 8 4 2 7 5 12 4 10 9 11 5 \\

\normalsize
\noindent
$p(2,15)=39809640$. The following is a partial list of all its solutions:
\medskip \\
 15 13 14 8 5 12 7 11 4 10 5 9 8 4 7 13 15 14 12 11 10 9 6 3 1 2 1 3 2 6 \\
 15 13 14 8 5 12 7 11 4 10 5 9 8 4 7 13 15 14 12 11 10 9 6 2 3 1 2 1 3 6 \\
 15 13 14 6 8 12 7 11 3 10 6 9 3 8 7 13 15 14 12 11 10 9 5 2 4 1 2 1 5 4 \\
 15 13 14 6 8 12 7 11 3 10 6 9 3 8 7 13 15 14 12 11 10 9 4 5 1 2 1 4 2 5 \\
 15 13 14 7 8 12 3 11 5 10 3 7 9 8 5 13 15 14 12 11 10 6 9 2 4 1 2 1 6 4 \\
 15 13 14 6 8 12 7 11 2 10 6 2 9 8 7 13 15 14 12 11 10 5 9 4 1 3 1 5 4 3 \\
 15 13 14 6 8 12 7 11 2 10 6 2 9 8 7 13 15 14 12 11 10 4 9 5 3 1 4 1 3 5 \\
 15 13 14 6 8 12 4 11 5 10 6 4 9 8 5 13 15 14 12 11 10 7 9 3 1 2 1 3 2 7 \\
 15 13 14 6 8 12 4 11 5 10 6 4 9 8 5 13 15 14 12 11 10 7 9 2 3 1 2 1 3 7 \\
 15 13 14 8 6 12 7 11 1 10 1 6 8 9 7 13 15 14 12 11 10 5 3 9 4 2 3 5 2 4 \\[5pt]
 9 15 8 1 14 1 11 6 13 5 9 8 12 10 6 5 4 15 11 14 7 4 13 3 10 12 2 3 7 2 \\
 9 15 4 7 14 8 11 4 13 5 9 7 12 10 8 5 3 15 11 14 3 6 13 2 10 12 2 1 6 1 \\
 9 15 6 3 14 8 11 3 13 6 9 4 12 10 8 5 4 15 11 14 7 5 13 2 10 12 2 1 7 1 \\
 9 15 4 1 14 1 11 4 13 6 9 3 12 10 8 3 6 15 11 14 7 5 13 8 10 12 2 5 7 2 \\
 9 15 2 3 14 2 11 3 13 5 9 4 12 10 8 5 4 15 11 14 6 7 13 8 10 12 1 6 1 7 \\
 9 15 2 7 14 2 11 6 13 5 9 7 12 10 6 5 4 15 11 14 8 4 13 3 10 12 1 3 1 8 \\
 9 15 2 5 14 2 11 7 13 5 9 4 12 10 6 7 4 15 11 14 8 6 13 3 10 12 1 3 1 8 \\
 8 15 7 3 14 9 11 3 13 8 7 2 12 10 2 9 4 15 11 14 6 4 13 5 10 12 1 6 1 5 \\
 8 15 7 3 14 9 11 3 13 8 7 4 12 10 5 9 4 15 11 14 5 6 13 2 10 12 2 1 6 1 \\
 6 15 8 5 14 9 11 6 13 5 3 8 12 10 3 9 4 15 11 14 7 4 13 2 10 12 2 1 7 1 \\[5pt]
 9 7 15 13 10 5 6 8 12 7 9 5 14 6 11 10 8 13 15 3 1 12 1 3 4 2 11 14 2 4 \\
 9 4 15 13 10 7 4 8 12 1 9 1 14 7 11 10 8 13 15 5 2 12 6 2 3 5 11 14 3 6 \\
 9 3 15 13 10 3 2 8 12 2 9 7 14 6 11 10 8 13 15 7 6 12 5 1 4 1 11 14 5 4 \\
 9 4 15 13 10 3 4 8 12 3 9 7 14 6 11 10 8 13 15 7 6 12 1 5 1 2 11 14 2 5 \\
 9 3 15 13 10 3 6 8 12 1 9 1 14 6 11 10 8 13 15 4 7 12 2 5 4 2 11 14 7 5 \\
 9 7 15 13 10 1 6 1 12 7 9 8 14 6 11 10 5 13 15 3 8 12 5 3 4 2 11 14 2 4 \\
 9 4 15 13 10 1 4 1 12 6 9 7 14 8 11 10 6 13 15 7 3 12 8 5 3 2 11 14 2 5 \\
 9 7 15 13 10 1 4 1 12 7 9 4 14 6 11 10 8 13 15 2 6 12 2 5 3 8 11 14 3 5 \\
 9 7 15 13 10 1 4 1 12 7 9 4 14 5 11 10 8 13 15 5 2 12 6 2 3 8 11 14 3 6 \\
 9 3 15 13 10 3 6 1 12 1 9 7 14 6 11 10 8 13 15 7 2 12 5 2 4 8 11 14 5 4 \\[5pt]
 7 2 9 15 2 8 14 10 7 12 5 11 9 13 8 4 5 6 10 15 4 14 12 11 6 3 1 13 1 3 \\
 4 2 9 15 2 4 14 10 7 12 6 11 9 13 5 8 7 6 10 15 5 14 12 11 8 3 1 13 1 3 \\
 1 3 1 15 9 3 14 10 6 12 5 11 7 13 9 6 5 8 10 15 7 14 12 11 4 2 8 13 2 4 \\
 1 3 1 15 5 3 14 10 6 12 5 11 4 13 9 6 8 4 10 15 7 14 12 11 9 8 2 13 7 2 \\
 4 6 1 15 1 4 14 10 6 12 5 11 2 13 9 2 5 8 10 15 7 14 12 11 9 3 8 13 7 3 \\
 4 8 1 15 1 4 14 10 5 12 8 11 3 13 5 9 3 6 10 15 7 14 12 11 6 9 2 13 7 2 \\
 3 6 2 15 3 2 14 10 6 12 5 11 1 13 1 9 5 8 10 15 7 14 12 11 4 9 8 13 7 4 \\
 4 2 7 15 2 4 14 10 5 12 7 11 3 13 5 9 3 6 10 15 8 14 12 11 6 9 1 13 1 8 \\
 4 8 1 15 1 4 14 10 5 12 8 11 2 13 5 2 9 6 10 15 7 14 12 11 6 3 9 13 7 3 \\
 4 2 5 15 2 4 14 10 5 12 6 11 1 13 1 8 9 6 10 15 7 14 12 11 8 3 9 13 7 3 \\[5pt]
 8 9 3 14 15 11 3 7 12 8 13 9 6 2 10 7 2 11 14 6 15 12 4 5 13 10 1 4 1 5 \\
 8 9 3 14 15 11 3 4 12 8 13 9 4 2 10 7 2 11 14 6 15 12 5 7 13 10 6 1 5 1 \\
 8 9 3 14 15 11 3 4 12 8 13 9 4 2 10 6 2 11 14 7 15 12 6 5 13 10 1 7 1 5 \\
 8 9 3 14 15 11 3 4 12 8 13 9 4 1 10 1 6 11 14 7 15 12 5 6 13 10 2 7 5 2 \\
 1 9 1 14 15 11 8 4 12 3 13 9 4 3 10 8 6 11 14 7 15 12 5 6 13 10 2 7 5 2 \\
 1 9 1 14 15 11 5 8 12 3 13 9 5 3 10 6 8 11 14 7 15 12 6 4 13 10 2 7 4 2 \\
 1 9 1 14 15 11 5 7 12 3 13 9 5 3 10 7 6 11 14 8 15 12 4 6 13 10 2 4 8 2 \\
 5 9 6 14 15 11 5 4 12 6 13 9 4 2 10 7 2 11 14 8 15 12 3 7 13 10 3 1 8 1 \\
 8 4 9 14 15 11 4 3 12 8 13 3 9 2 10 7 2 11 14 6 15 12 5 7 13 10 6 1 5 1 \\
 8 4 9 14 15 11 4 3 12 8 13 3 9 2 10 6 2 11 14 7 15 12 6 5 13 10 1 7 1 5 \\[5pt]
 14 9 6 8 3 15 10 12 3 6 11 9 8 13 1 14 1 10 7 4 12 15 11 5 4 2 7 13 2 5 \\
 14 9 1 8 1 15 10 12 5 6 11 9 8 13 5 14 6 10 7 3 12 15 11 3 4 2 7 13 2 4 \\
 14 9 6 4 7 15 10 12 4 6 11 9 7 13 8 14 2 10 5 2 12 15 11 8 5 3 1 13 1 3 \\
 14 9 6 4 7 15 10 12 4 6 11 9 7 13 8 14 1 10 1 5 12 15 11 8 3 5 2 13 3 2 \\
 14 9 6 4 7 15 10 12 4 6 11 9 7 13 1 14 1 10 5 8 12 15 11 2 5 3 2 13 8 3 \\
 14 2 9 7 2 15 10 12 5 8 11 7 9 13 5 14 6 10 8 4 12 15 11 6 4 3 1 13 1 3 \\
 14 7 9 4 6 15 10 12 4 7 11 6 9 13 8 14 2 10 5 2 12 15 11 8 5 3 1 13 1 3 \\
 14 7 9 4 6 15 10 12 4 7 11 6 9 13 8 14 1 10 1 5 12 15 11 8 3 5 2 13 3 2 \\
 14 6 9 7 4 15 10 12 6 4 11 7 9 13 8 14 2 10 5 2 12 15 11 8 5 3 1 13 1 3 \\
 14 6 9 7 4 15 10 12 6 4 11 7 9 13 8 14 1 10 1 5 12 15 11 8 3 5 2 13 3 2 \\[5pt]
 3 1 4 1 3 5 15 4 7 9 10 5 11 13 14 12 7 8 6 9 2 10 15 2 11 6 8 13 12 14 \\
 8 5 2 4 6 2 15 5 4 8 10 6 11 13 14 12 9 7 1 3 1 10 15 3 11 7 9 13 12 14 \\
 7 5 8 1 4 1 15 5 7 4 10 8 11 13 14 12 9 2 6 3 2 10 15 3 11 6 9 13 12 14 \\
 6 7 8 1 3 1 15 6 3 7 10 8 11 13 14 12 9 2 4 5 2 10 15 4 11 5 9 13 12 14 \\
 4 6 8 5 2 4 15 2 6 5 10 8 11 13 14 12 9 7 1 3 1 10 15 3 11 7 9 13 12 14 \\
 4 7 1 3 1 4 15 3 8 7 10 6 11 13 14 12 9 8 6 5 2 10 15 2 11 5 9 13 12 14 \\
 3 7 4 1 3 1 15 4 8 7 10 6 11 13 14 12 9 8 6 5 2 10 15 2 11 5 9 13 12 14 \\
 6 1 2 1 3 2 15 6 3 8 10 7 11 13 14 12 9 5 8 7 4 10 15 5 11 4 9 13 12 14 \\
 6 1 2 1 3 2 15 6 3 7 10 8 11 13 14 12 9 7 4 5 8 10 15 4 11 5 9 13 12 14 \\
 4 1 5 1 2 4 15 2 5 7 10 8 11 13 14 12 9 7 6 3 8 10 15 3 11 6 9 13 12 14 \\

\normalsize
\noindent
$p(2,16)=326721800$. The following is a partial list of all its solutions:
\medskip \\
 16 14 15 9 7 13 3 12 6 11 3 10 7 9 8 6 14 16 15 13 12 11 10 8 5 2 4 1 2 1 5 4 \\
 16 14 15 9 7 13 3 12 6 11 3 10 7 9 8 6 14 16 15 13 12 11 10 8 4 5 1 2 1 4 2 5 \\
 16 14 15 9 7 13 3 12 5 11 3 10 7 9 5 8 14 16 15 13 12 11 10 6 8 2 4 1 2 1 6 4 \\
 16 14 15 9 5 13 7 12 3 11 5 10 3 9 7 8 14 16 15 13 12 11 10 6 8 2 4 1 2 1 6 4 \\
 16 14 15 6 9 13 5 12 4 11 6 10 5 4 9 8 14 16 15 13 12 11 10 7 8 3 1 2 1 3 2 7 \\
 16 14 15 6 9 13 5 12 4 11 6 10 5 4 9 8 14 16 15 13 12 11 10 7 8 2 3 1 2 1 3 7 \\
 16 14 15 8 5 13 7 12 6 11 5 10 8 9 7 6 14 16 15 13 12 11 10 9 4 1 3 1 2 4 3 2 \\
 16 14 15 8 5 13 7 12 6 11 5 10 8 9 7 6 14 16 15 13 12 11 10 9 2 3 4 2 1 3 1 4 \\
 16 14 15 6 7 13 8 12 5 11 6 10 7 9 5 8 14 16 15 13 12 11 10 9 4 1 3 1 2 4 3 2 \\
 16 14 15 6 7 13 8 12 5 11 6 10 7 9 5 8 14 16 15 13 12 11 10 9 2 3 4 2 1 3 1 4 \\[5pt]
 10 16 5 12 9 11 15 1 5 1 8 10 14 7 9 13 12 11 16 8 3 7 15 6 3 2 4 14 2 13 6 4 \\
 10 16 6 12 9 11 15 5 1 6 1 10 14 5 9 13 12 11 16 8 4 2 15 7 2 4 3 14 8 13 3 7 \\
 10 16 7 12 9 11 15 1 4 1 7 10 14 4 9 13 12 11 16 6 2 8 15 2 3 5 6 14 3 13 8 5 \\
 10 16 7 12 9 11 15 1 4 1 7 10 14 4 9 13 12 11 16 3 5 8 15 3 6 2 5 14 2 13 8 6 \\
 10 16 6 12 9 11 15 2 4 6 2 10 14 4 9 13 12 11 16 3 7 8 15 3 1 5 1 14 7 13 8 5 \\
 10 16 4 12 9 11 15 4 1 3 1 10 14 3 9 13 12 11 16 5 7 8 15 2 6 5 2 14 7 13 8 6 \\
 10 16 6 12 9 11 15 2 4 6 2 10 14 4 9 13 12 11 16 5 3 8 15 7 3 5 1 14 1 13 8 7 \\
 10 16 5 12 9 11 15 1 5 1 2 10 14 2 9 13 12 11 16 6 4 8 15 7 3 4 6 14 3 13 8 7 \\
 10 16 5 12 8 11 15 2 5 9 2 10 14 8 6 13 12 11 16 9 3 6 15 7 3 4 1 14 1 13 4 7 \\
 10 16 7 12 2 11 15 2 4 9 7 10 14 4 8 13 12 11 16 9 5 3 15 8 6 3 5 14 1 13 1 6 \\[5pt]
 10 8 16 4 11 14 5 9 4 13 8 10 5 15 7 12 11 9 2 16 14 2 7 13 6 1 3 1 12 15 3 6 \\
 10 8 16 2 11 14 2 9 3 13 8 10 3 15 6 12 11 9 5 16 14 6 7 13 5 1 4 1 12 15 7 4 \\
 10 6 16 8 11 14 3 9 6 13 3 10 8 15 7 12 11 9 2 16 14 2 7 13 5 1 4 1 12 15 5 4 \\
 10 4 16 8 11 14 4 9 1 13 1 10 8 15 6 12 11 9 7 16 14 6 2 13 5 2 7 3 12 15 5 3 \\
 10 6 16 2 11 14 2 9 6 13 3 10 8 15 3 12 11 9 5 16 14 8 7 13 5 1 4 1 12 15 7 4 \\
 10 1 16 1 11 14 3 9 5 13 3 10 8 15 5 12 11 9 6 16 14 8 7 13 2 6 4 2 12 15 7 4 \\
 10 6 16 2 11 14 2 9 6 13 1 10 1 15 7 12 11 9 8 16 14 3 7 13 5 3 4 8 12 15 5 4 \\
 10 8 16 3 11 14 5 3 9 13 8 10 5 15 6 12 11 7 9 16 14 6 1 13 1 7 4 2 12 15 2 4 \\
 10 8 16 3 11 14 5 3 9 13 8 10 5 15 2 12 11 2 9 16 14 4 7 13 6 1 4 1 12 15 7 6 \\
 10 5 16 8 11 14 7 5 9 13 6 10 8 15 7 12 11 6 9 16 14 3 1 13 1 3 4 2 12 15 2 4 \\[5pt]
 9 11 6 16 2 10 5 2 15 6 9 13 5 11 14 12 10 8 4 1 16 1 7 4 15 13 8 3 12 14 7 3 \\
 9 11 6 16 2 10 5 2 15 6 9 13 5 11 14 12 10 4 8 1 16 1 4 7 15 13 3 8 12 14 3 7 \\
 9 11 6 16 1 10 1 4 15 6 9 13 4 11 14 12 10 5 7 2 16 8 2 5 15 13 7 3 12 14 8 3 \\
 9 11 6 16 1 10 1 4 15 6 9 13 4 11 14 12 10 5 2 7 16 2 8 5 15 13 3 7 12 14 3 8 \\
 5 11 9 16 2 10 5 2 15 8 6 13 9 11 14 12 10 6 8 1 16 1 7 3 15 13 4 3 12 14 7 4 \\
 5 11 9 16 2 10 5 2 15 8 6 13 9 11 14 12 10 6 8 1 16 1 4 7 15 13 3 4 12 14 3 7 \\
 6 11 9 16 5 10 2 6 15 2 5 13 9 11 14 12 10 8 4 1 16 1 7 4 15 13 8 3 12 14 7 3 \\
 6 11 9 16 5 10 2 6 15 2 5 13 9 11 14 12 10 4 8 1 16 1 4 7 15 13 3 8 12 14 3 7 \\
 6 11 9 16 4 10 3 6 15 4 3 13 9 11 14 12 10 5 7 1 16 1 8 5 15 13 7 2 12 14 2 8 \\
 8 11 7 16 1 10 1 9 15 8 7 13 5 11 14 12 10 9 5 2 16 4 2 6 15 13 4 3 12 14 6 3 \\[5pt]
 14 10 3 9 16 13 3 1 8 1 15 6 10 9 12 14 11 8 6 13 2 16 7 2 4 5 15 12 11 4 7 5 \\
 14 10 4 9 16 13 2 4 8 2 15 6 10 9 12 14 11 8 6 13 1 16 1 7 5 3 15 12 11 3 5 7 \\
 14 10 4 9 16 13 1 4 1 8 15 5 10 9 12 14 11 5 8 13 2 16 7 2 6 3 15 12 11 3 7 6 \\
 14 10 3 9 16 13 3 1 2 1 15 2 10 9 12 14 11 6 4 13 8 16 7 4 6 5 15 12 11 8 7 5 \\
 14 10 3 9 16 13 3 1 2 1 15 2 10 9 12 14 11 4 6 13 8 16 4 7 5 6 15 12 11 8 5 7 \\
 14 10 4 9 16 13 1 4 1 7 15 6 10 9 12 14 11 7 6 13 2 16 8 2 5 3 15 12 11 3 5 8 \\
 14 10 5 9 16 13 2 3 5 2 15 3 10 9 12 14 11 7 1 13 1 16 8 6 4 7 15 12 11 4 6 8 \\
 14 10 8 5 16 13 1 9 1 5 15 8 10 4 12 14 11 9 4 13 2 16 7 2 6 3 15 12 11 3 7 6 \\
 14 10 5 2 16 13 2 9 5 8 15 1 10 1 12 14 11 9 8 13 3 16 6 7 3 4 15 12 11 6 4 7 \\
 14 10 3 5 16 13 3 9 2 5 15 2 10 4 12 14 11 9 4 13 8 16 7 1 6 1 15 12 11 8 7 6 \\[5pt]
 12 9 6 15 10 16 3 8 14 6 3 9 11 12 13 10 8 2 7 15 2 5 16 14 11 4 7 5 13 1 4 1 \\
 12 9 4 15 10 16 3 4 14 6 3 9 11 12 13 10 6 8 1 15 1 7 16 14 11 5 8 2 13 7 2 5 \\
 12 9 3 15 10 16 3 1 14 1 6 9 11 12 13 10 4 6 7 15 8 4 16 14 11 5 7 2 13 8 2 5 \\
 12 9 4 15 10 16 3 4 14 7 3 9 11 12 13 10 1 7 1 15 6 8 16 14 11 5 2 6 13 2 8 5 \\
 12 9 4 15 10 16 3 4 14 6 3 9 11 12 13 10 6 2 7 15 2 8 16 14 11 5 7 1 13 1 8 5 \\
 12 9 4 15 10 16 2 4 14 2 6 9 11 12 13 10 3 6 7 15 3 8 16 14 11 5 7 1 13 1 8 5 \\
 12 8 4 15 10 16 9 4 14 7 8 6 11 12 13 10 9 7 6 15 5 3 16 14 11 3 5 2 13 1 2 1 \\
 12 8 6 15 10 16 9 3 14 6 8 3 11 12 13 10 9 2 7 15 2 5 16 14 11 4 7 5 13 1 4 1 \\
 12 7 4 15 10 16 9 4 14 7 6 8 11 12 13 10 9 6 2 15 8 2 16 14 11 5 3 1 13 1 3 5 \\
 12 5 7 15 10 16 9 5 14 1 7 1 11 12 13 10 9 3 6 15 8 3 16 14 11 6 4 2 13 8 2 4 \\[5pt]
 10 8 9 4 12 1 16 1 4 14 8 10 9 15 11 6 13 12 2 5 7 2 6 16 14 5 11 3 7 15 13 3 \\
 10 8 9 4 12 1 16 1 4 14 8 10 9 15 11 5 13 12 6 2 7 5 2 16 14 6 11 3 7 15 13 3 \\
 10 6 9 1 12 1 16 7 6 14 8 10 9 15 11 7 13 12 3 8 4 5 3 16 14 4 11 5 2 15 13 2 \\
 10 6 9 1 12 1 16 7 6 14 8 10 9 15 11 7 13 12 2 8 4 2 5 16 14 4 11 3 5 15 13 3 \\
 10 3 9 4 12 3 16 7 4 14 8 10 9 15 11 7 13 12 6 8 1 5 1 16 14 6 11 5 2 15 13 2 \\
 10 3 9 4 12 3 16 2 4 14 2 10 9 15 11 6 13 12 8 1 7 1 6 16 14 5 11 8 7 15 13 5 \\
 10 6 9 3 12 4 16 3 6 14 4 10 9 15 11 5 13 12 1 7 1 5 8 16 14 2 11 7 2 15 13 8 \\
 10 8 5 3 12 9 16 3 5 14 8 10 7 15 11 9 13 12 6 1 7 1 4 16 14 6 11 4 2 15 13 2 \\
 10 6 7 3 12 9 16 3 6 14 7 10 8 15 11 9 13 12 1 5 1 8 4 16 14 5 11 4 2 15 13 2 \\
 10 5 7 4 12 9 16 5 4 14 7 10 6 15 11 9 13 12 8 6 1 3 1 16 14 3 11 8 2 15 13 2 \\[5pt]
 9 7 5 1 6 1 10 16 5 7 9 6 13 14 15 11 12 10 2 3 4 2 8 3 16 4 13 11 14 12 15 8 \\
 9 3 5 7 4 3 10 16 5 4 9 7 13 14 15 11 12 10 6 1 2 1 8 2 16 6 13 11 14 12 15 8 \\
 9 7 3 1 6 1 3 16 10 7 9 6 13 14 15 11 12 5 2 10 4 2 8 5 16 4 13 11 14 12 15 8 \\
 9 3 6 7 1 3 1 16 10 6 9 7 13 14 15 11 12 5 2 10 4 2 8 5 16 4 13 11 14 12 15 8 \\
 9 6 2 7 1 2 1 16 6 10 9 7 13 14 15 11 12 3 4 5 10 3 8 4 16 5 13 11 14 12 15 8 \\
 9 4 2 7 3 2 4 16 3 10 9 7 13 14 15 11 12 5 6 1 10 1 8 5 16 6 13 11 14 12 15 8 \\
 9 4 5 1 6 1 4 16 5 10 9 6 13 14 15 11 12 7 2 3 10 2 8 3 16 7 13 11 14 12 15 8 \\
 9 3 5 2 6 3 2 16 5 10 9 6 13 14 15 11 12 7 4 1 10 1 8 4 16 7 13 11 14 12 15 8 \\
 9 2 3 4 2 10 3 16 4 7 9 11 12 14 15 13 10 7 1 6 1 5 8 11 16 12 6 5 14 13 15 8 \\
 9 3 1 4 1 3 10 16 4 6 9 11 12 14 15 13 6 10 2 7 5 2 8 11 16 12 5 7 14 13 15 8 \\

\normalsize
\noindent
$p(3,9)=3$. The following is a complete list of all its solutions:
\medskip \\
 1 9 1 6 1 8 2 5 7 2 6 9 2 5 8 4 7 6 3 5 4 9 3 8 7 4 3 \\
 1 9 1 2 1 8 2 4 6 2 7 9 4 5 8 6 3 4 7 5 3 9 6 8 3 5 7 \\
 3 4 7 9 3 6 4 8 3 5 7 4 6 9 2 5 8 2 7 6 2 5 1 9 1 8 1 \\

\noindent
$p(3,9)=5$. The following is a complete list of all its solutions:
\medskip \\
 1 10 1 6 1 7 9 3 5 8 6 3 10 7 5 3 9 6 8 4 5 7 2 10 4 2 9 8 2 4 \\
 1 10 1 2 1 4 2 9 7 2 4 8 10 5 6 4 7 9 3 5 8 6 3 10 7 5 3 9 6 8 \\
 4 10 1 7 1 4 1 8 9 3 4 7 10 3 5 6 8 3 9 7 5 2 6 10 2 8 5 2 9 6 \\
 8 1 10 1 3 1 9 6 3 8 4 7 3 10 6 4 9 5 8 7 4 6 2 5 10 2 9 7 2 5 \\
 1 3 1 10 1 3 4 9 6 3 8 4 5 7 10 6 4 9 5 8 2 7 6 2 5 10 2 9 8 7 \\

\DeclareFixedFont{\smallrm}{T1}{ptm}{m}{n}{9pt}
\noindent
$p(3,17)=13440$. The following is a partial list of all its solutions:
\medskip \\ \smallrm
 17 15 3 16 9 10 3 1 12 1 3 1 13 14 9 6 10 15 17 5 16 12 6 11 9 5 13 10
 14 6 7 5 8 15 12 11 17 16 7 4 13 8 2 14 4 2 7 11 2 4 8 \\
 17 9 11 16 2 3 15 2 10 3 2 9 14 3 11 12 7 13 17 10 16 9 15 5 7 8 11 14
 12 5 10 13 7 6 8 5 17 16 15 4 6 12 14 8 4 13 1 6 1 4 1 \\
 17 11 14 16 4 10 1 15 1 4 1 9 6 11 4 13 10 14 17 6 16 9 12 15 8 11 6 10
 5 13 7 9 14 8 5 12 17 16 7 15 5 3 8 13 2 3 7 2 12 3 2  \\
 17 14 9 16 2 8 10 2 15 11 2 7 9 12 8 13 14 10 17 7 16 11 9 8 15 4 12 7
 10 13 4 14 5 11 6 4 17 16 5 12 15 6 3 13 5 1 3 1 6 1 3  \\
 17 4 13 16 2 8 4 2 15 3 2 4 14 3 8 6 13 3 17 11 16 12 6 8 15 9 5 14 10
 6 13 11 5 7 12 9 17 16 5 10 15 7 14 11 1 9 1 12 1 7 10 \\
 17 13 3 16 4 5 3 11 15 4 3 5 9 14 4 13 6 5 17 11 16 12 9 6 15 7 8 10 14
 13 6 11 9 7 12 8 17 16 10 2 15 7 2 14 8 2 1 12 1 10 1 \\
 17 7 9 16 2 8 4 2 15 7 2 4 9 13 8 14 4 7 17 11 16 12 9 8 15 3 5 13 10 3
 14 11 5 3 12 6 17 16 5 10 15 13 6 11 1 14 1 12 1 6 10 \\
 17 4 9 16 2 3 4 2 15 3 2 4 9 3 13 14 10 5 17 11 16 12 9 5 15 6 8 10 13 5
 14 11 6 7 12 8 17 16 10 6 15 7 13 11 8 14 1 12 1 7 1 \\
 17 14 4 16 8 2 3 4 2 15 3 2 4 8 3 5 14 12 17 13 16 5 8 10 7 15 11 5 6 9 12
 14 7 13 10 6 17 16 11 9 7 15 6 12 1 10 1 13 1 9 11  \\
 17 4 14 16 5 6 4 7 12 15 5 4 6 8 13 7 5 14 17 6 16 12 8 7 10 15 11 2 13 9
 2 8 14 2 12 10 17 16 11 9 3 15 13 1 3 1 10 1 3 9 11 \\[5pt]
 4 17 12 8 9 4 16 1 13 1 4 1 8 15 9 12 14 7 11 17 5 8 13 16 9 7 5 10 12 15
 11 14 5 7 6 3 13 17 10 3 16 6 11 3 2 15 14 2 6 10 2 \\
 6 17 8 5 3 12 16 6 3 5 11 8 3 15 6 5 9 14 12 17 8 13 11 16 2 7 9 2 10 15 2
 12 14 7 11 13 9 17 4 10 16 7 1 4 1 15 1 14 4 13 10 \\
 9 17 5 13 10 1 16 1 5 1 9 4 12 15 5 10 4 13 14 17 9 4 11 16 2 12 10 2 7 15
 2 13 8 14 11 6 7 17 12 3 16 8 6 3 7 15 11 3 14 6 8 \\
 10 17 6 13 2 9 16 2 5 6 2 10 11 15 5 9 6 13 14 17 5 12 10 16 11 9 1 7 1 15
 1 13 8 14 12 7 11 17 3 4 16 8 3 7 4 15 3 12 14 4 8  \\
 4 17 11 8 12 4 16 1 6 1 4 1 8 15 11 6 10 12 13 17 14 8 6 16 9 2 11 10 2 15
 12 2 13 7 9 14 5 17 10 3 16 7 5 3 9 15 13 3 5 7 14  \\
 13 17 10 1 3 1 16 1 3 12 9 4 3 10 13 15 4 11 14 17 9 4 12 16 10 6 7 8 13 11
 9 15 6 14 7 12 8 17 5 6 16 11 7 2 5 8 2 15 14 2 5   \\
 11 17 8 10 5 3 16 9 12 3 5 8 11 3 10 15 5 9 14 17 8 12 13 16 11 10 2 9 6 2 7
 15 2 14 12 6 13 17 7 4 16 1 6 1 4 1 7 15 14 4 13  \\
 12 17 2 3 4 2 16 3 2 4 5 3 11 12 4 14 5 15 6 17 13 9 5 16 11 6 12 7 10 8 14
 9 6 15 13 7 11 17 8 10 16 9 1 7 1 14 1 8 13 15 10  \\
 13 17 8 11 3 1 16 1 3 1 10 8 3 12 13 11 14 15 5 17 8 10 6 16 5 9 12 11 13 6
 5 14 10 15 7 9 6 17 4 12 16 2 7 4 2 9 14 2 4 15 7  \\
 10 17 4 11 9 3 16 4 12 3 7 10 4 3 9 11 13 15 7 17 14 12 10 16 9 2 7 11 2 8 13
 2 6 15 12 14 5 17 8 6 16 1 5 1 13 1 6 8 5 15 14 \\[5pt]
 2 12 17 2 5 15 2 1 10 1 5 1 16 14 12 9 5 13 3 10 17 15 3 8 11 9 3 12 14 16 10
 13 8 6 7 9 11 15 17 4 6 8 7 14 4 13 16 6 11 4 7 \\
 8 2 17 10 2 15 1 2 1 8 1 5 16 13 10 14 9 5 8 11 17 15 12 5 3 10 9 13 3 16 14 11
 3 6 7 12 9 15 17 4 6 13 7 11 4 14 16 6 12 4 7  \\
 2 6 17 2 13 10 2 15 6 4 11 9 16 14 4 6 10 8 13 4 17 9 11 15 12 7 8 10 14 16 5 9
 13 7 11 8 5 12 17 15 3 7 5 14 3 1 16 1 3 1 12  \\
 3 4 17 5 3 13 4 15 3 5 10 4 16 14 1 5 1 6 1 13 17 10 12 15 6 11 7 8 14 16 9 6 10
 13 7 12 8 11 17 15 9 2 7 14 2 8 16 2 12 11 9  \\
 13 9 17 5 8 10 1 15 1 5 1 9 16 8 13 5 10 14 12 6 17 9 8 15 11 4 6 10 13 16 4 12
 14 6 7 4 11 2 17 15 2 3 7 2 12 3 16 14 11 3 7  \\
 6 1 17 1 12 1 9 6 15 8 10 3 16 14 6 3 9 12 8 3 17 10 13 11 15 5 9 8 14 16 12 5 10
 7 4 11 13 5 17 4 15 7 2 14 4 2 16 11 2 7 13  \\
 10 7 17 1 5 1 12 1 15 7 5 10 16 9 13 14 5 7 8 12 17 4 10 9 15 11 4 8 13 16 14 4 12
 9 6 3 8 11 17 3 15 6 13 3 2 14 16 2 6 11 2  \\
 12 1 17 1 2 1 10 2 15 6 2 9 16 12 3 14 6 10 3 13 17 9 3 6 15 11 12 7 10 16 14 9 8 13
 4 7 5 11 17 4 15 8 5 7 4 14 16 13 5 11 8  \\
 10 7 17 1 11 1 3 1 15 7 3 10 16 12 3 14 11 7 8 5 17 13 10 9 15 5 12 8 11 16 14 5 4 9
 6 13 8 4 17 12 15 6 4 9 2 14 16 2 6 13 2  \\
 2 12 17 2 5 9 2 11 15 1 5 1 16 1 12 9 5 14 4 11 17 13 10 4 15 9 7 12 4 16 8 11 14 10
 7 13 6 3 17 8 15 3 7 6 10 3 16 14 8 13 6 \\[5pt]
 13 7 1 17 1 14 1 8 4 7 16 10 11 4 13 15 8 7 4 12 14 17 10 2 11 8 2 16 13 2 9 15 12
 10 6 14 11 5 3 17 9 6 3 5 16 12 3 15 6 5 9 \\
 1 3 1 17 1 3 9 14 2 3 16 2 10 5 2 15 9 12 13 5 8 17 14 10 11 5 9 16 6 8 12 15 13 7
 10 6 11 14 8 17 4 7 6 12 16 4 13 15 11 7 4 \\
 8 4 5 17 11 6 4 14 5 8 16 4 6 10 5 15 11 7 8 6 13 17 14 12 10 7 2 16 11 2 9 15 2 7
 13 10 12 14 3 17 9 1 3 1 16 1 3 15 13 12 9 \\
 13 10 2 17 14 2 6 3 2 7 16 3 10 6 13 3 15 7 11 14 6 17 8 10 12 7 9 16 13 5 11 8 15
 4 14 5 9 12 4 17 8 5 11 4 16 1 9 1 15 1 12  \\
 11 4 5 17 14 9 4 10 5 13 16 4 11 8 5 9 15 12 10 14 7 17 8 13 11 9 6 16 7 10 12 8 15
 6 14 2 7 13 2 17 6 2 3 12 16 1 3 1 15 1 3  \\
 12 10 8 17 14 1 7 1 9 1 16 8 10 12 7 13 15 6 9 14 8 17 7 10 6 11 12 16 9 13 2 6 15
 2 14 5 2 11 3 17 4 5 3 13 16 4 3 5 15 11 4  \\
 12 3 4 17 14 3 8 4 9 3 16 2 4 12 2 8 15 2 9 14 10 17 13 5 8 11 12 16 9 5 7 10 15 6
 14 5 13 11 7 17 6 1 10 1 16 1 7 6 15 11 13  \\
 13 7 9 17 1 14 1 3 1 7 16 3 9 5 13 3 15 7 11 5 14 17 9 10 12 5 6 16 13 8 11 4 15 6
 10 14 4 12 8 17 6 4 11 2 16 10 2 8 15 2 12  \\
 13 5 9 17 1 14 1 5 1 7 16 3 9 5 13 3 15 7 11 3 14 17 9 10 12 7 6 16 13 8 11 4 15 6
 10 14 4 12 8 17 6 4 11 2 16 10 2 8 15 2 12  \\
 8 5 9 17 1 14 1 5 1 8 16 12 9 5 13 7 15 11 8 6 14 17 9 7 12 10 6 16 13 11 3 7 15 6
 3 14 10 12 3 17 4 11 13 2 16 4 2 10 15 2 4 \\[5pt]
 2 3 11 2 17 3 2 16 10 3 12 8 15 1 11 1 14 1 6 10 8 13 17 12 16 6 11 9 15 8 10 14
 6 5 7 13 12 9 4 5 17 16 7 4 15 5 14 9 4 13 7 \\
 2 13 8 2 17 10 2 16 12 4 7 8 15 9 4 13 10 14 7 4 8 12 17 9 16 11 7 10 15 13 5 6 14
 9 12 3 5 11 6 3 17 16 5 3 15 6 1 14 1 11 1 \\
 8 12 13 9 17 5 1 16 1 8 1 5 15 9 12 4 13 5 8 14 4 11 17 9 16 4 10 12 15 6 13 7 2
 11 14 2 6 10 2 7 17 16 3 6 15 11 3 7 10 14 3  \\
 8 12 13 9 17 5 1 16 1 8 1 5 15 9 12 4 13 5 8 14 4 11 17 9 16 4 10 12 15 2 13 7 2 11
 14 2 6 10 3 7 17 16 3 6 15 11 3 7 10 14 6  \\
 2 6 7 2 17 3 2 16 6 3 7 11 15 3 10 6 12 13 7 14 5 8 17 11 16 10 5 9 15 12 8 13 5 4
 14 11 10 9 4 8 17 16 12 4 15 13 1 9 1 14 1  \\
 2 13 6 2 17 5 2 16 12 6 7 5 15 9 10 13 6 5 7 11 14 12 17 9 16 10 7 4 15 13 8 11 4 9
 12 14 10 4 3 8 17 16 3 11 15 1 3 1 8 1 14  \\
 4 6 8 10 17 4 2 16 6 2 4 8 2 15 10 6 13 14 7 11 8 5 17 12 16 10 7 5 9 15 13 11 14 5
 7 1 12 1 9 1 17 16 3 11 13 15 3 14 9 12 3  \\
 2 6 13 2 17 5 2 16 6 4 10 5 7 15 4 6 13 5 14 4 7 10 17 12 16 8 11 9 7 15 13 3 10 14
 8 3 12 9 11 3 17 16 1 8 1 15 1 9 14 12 11  \\
 7 1 3 1 17 1 3 16 7 2 3 6 2 15 13 2 7 9 6 14 12 10 17 11 16 6 4 9 13 15 8 4 10 12 14
 11 4 9 5 8 17 16 13 10 5 15 12 11 8 14 5  \\
 2 6 11 2 17 5 2 16 6 8 10 5 12 15 11 6 13 5 8 14 7 10 17 9 16 12 11 8 7 15 13 3 10 9
 14 3 7 4 12 3 17 16 4 9 13 15 1 4 1 14 1 \\[5pt]
 16 11 7 3 9 17 12 3 13 10 7 3 15 11 9 14 4 16 7 12 10 4 13 17 9 11 4 5 15 8 14 10
 12 5 16 6 13 2 8 5 2 17 6 2 15 14 1 8 1 6 1 \\
 16 8 13 5 1 17 1 6 1 5 8 12 7 15 6 5 13 16 14 8 7 6 9 17 12 10 11 3 7 15 13 3 9 14
 16 3 10 12 11 4 2 17 9 2 4 15 2 10 14 4 11 \\
 16 12 6 3 10 17 1 3 1 6 1 3 11 15 12 10 6 16 14 8 4 13 9 17 11 4 10 12 8 15 4 7 9 14
 16 13 11 8 5 7 2 17 9 2 5 15 2 7 14 13 5  \\
 16 12 6 3 1 17 1 3 1 6 8 3 11 15 12 10 6 16 14 8 4 13 9 17 11 4 10 12 8 15 4 7 9 14
 16 13 11 10 5 7 2 17 9 2 5 15 2 7 14 13 5  \\
 16 6 4 2 11 17 2 4 6 2 9 7 4 15 10 6 11 16 14 7 9 13 8 17 12 10 5 7 11 15 9 8 5 14
 16 13 10 12 5 3 8 17 1 3 1 15 1 3 14 13 12  \\
 16 4 6 8 2 17 4 2 10 6 2 4 8 15 12 9 6 16 14 10 11 8 13 17 1 9 1 12 1 15 10 7 11 14
 16 9 13 5 3 7 12 17 3 5 11 15 3 7 14 5 13  \\
 16 9 4 2 10 17 2 4 13 2 8 9 4 6 15 10 12 16 14 8 6 9 13 17 7 11 10 6 8 12 15 3 7 14
 16 3 13 11 5 3 7 17 12 1 5 1 15 1 14 11 5  \\
 16 6 4 2 8 17 2 4 6 2 13 9 4 8 15 6 12 16 14 7 5 9 8 17 13 11 5 7 10 12 15 9 5 14 16
 7 3 11 13 10 3 17 12 1 3 1 15 1 14 11 10  \\
 16 2 3 10 2 17 3 2 11 9 3 4 13 14 10 15 4 16 8 9 11 4 7 17 12 10 13 8 14 9 7 15 11 5
 16 6 8 12 7 5 13 17 6 14 1 5 1 15 1 6 12  \\
 16 11 6 8 13 17 1 12 1 6 1 7 8 11 14 15 6 16 13 7 12 8 3 17 10 11 3 7 9 14 3 15 13 12
 16 10 2 5 9 2 4 17 2 5 14 4 10 15 9 5 4 \\[5pt]
 4 9 5 15 13 4 17 12 5 7 4 9 16 14 5 11 6 7 13 15 12 9 8 6 17 7 10 11 14 16 6 8 13
 12 1 15 1 10 1 11 8 3 17 14 2 3 16 2 10 3 2 \\
 10 2 5 15 2 4 17 2 5 6 4 10 16 14 5 4 6 9 12 15 13 8 10 6 17 7 11 9 14 16 8 12 3
 7 13 15 3 9 11 8 3 7 17 14 12 1 16 1 13 1 11 \\
 13 2 7 15 2 3 17 2 11 3 7 9 16 3 13 14 8 4 7 15 11 9 4 12 17 8 10 4 13 16 14 9 11
 6 8 15 12 10 5 1 6 1 17 1 5 14 16 6 10 12 5  \\
 3 12 7 15 3 1 17 1 3 1 7 4 16 9 12 14 4 10 7 15 13 4 8 9 17 11 5 12 10 16 14 8 5 9
 13 15 6 11 5 10 8 2 17 6 2 14 16 2 13 11 6  \\
 3 11 7 15 3 10 17 1 3 1 7 1 16 11 2 14 10 2 7 15 2 12 13 8 17 11 6 10 9 16 14 5 8 6
 12 15 13 5 9 4 6 8 17 5 4 14 16 12 9 4 13  \\
 1 5 1 15 1 9 17 5 2 11 7 2 16 5 2 9 13 14 7 15 6 11 8 12 17 9 7 6 10 16 13 8 14 11 6
 15 12 3 4 10 8 3 17 4 13 3 16 14 4 12 10  \\
 13 11 8 15 5 1 17 1 6 1 5 8 16 11 13 6 5 9 14 15 8 10 6 12 17 11 3 9 13 16 3 7 10 14
 3 15 12 9 2 7 4 2 17 10 2 4 16 7 14 12 4  \\
 8 1 13 1 15 1 17 7 2 8 10 2 16 14 2 7 13 9 8 5 15 10 12 7 17 5 11 9 14 16 13 5 10 6 4
 12 15 9 11 4 6 3 17 14 4 3 16 6 12 3 11  \\
 7 8 11 13 15 1 17 1 7 1 8 6 16 9 11 14 7 13 6 8 15 12 4 9 17 6 11 4 10 16 14 13 4 9
 12 2 15 5 2 10 3 2 17 5 3 14 16 12 3 5 10  \\
 4 12 6 3 15 4 17 3 9 6 4 3 16 5 12 7 6 14 9 5 15 13 10 7 17 5 11 12 9 16 8 7 14 10 2
 13 15 2 11 8 2 1 17 1 10 1 16 14 8 13 11 \\[5pt]
 4 12 6 13 7 4 14 17 10 6 4 11 7 15 12 16 6 13 3 10 7 14 3 11 8 17 3 12 9 15 10 13 
 16 8 5 11 14 1 9 1 5 1 8 17 2 15 5 2 9 16 2 \\
 4 10 11 13 6 4 14 17 7 12 4 6 10 15 11 16 7 13 6 8 9 14 12 10 7 17 11 5 8 15 9 13 
 16 5 3 12 14 8 3 5 9 2 3 17 2 15 1 2 1 16 1  \\
 13 1 3 1 6 1 3 17 14 11 3 6 4 15 13 16 9 4 6 7 10 11 4 14 12 17 9 7 13 15 8 10 16 
 11 5 7 9 12 14 8 5 2 10 17 2 15 5 2 8 16 12 \\
 10 1 3 1 12 1 3 17 7 4 3 10 13 15 4 16 7 12 14 4 6 9 10 11 7 17 13 6 8 15 12 9 16 
 14 6 11 2 8 5 2 13 9 2 17 5 15 8 11 14 16 5  \\
 11 9 1 14 1 3 1 17 13 3 8 9 11 3 15 16 12 2 14 8 2 9 13 2 11 17 10 6 8 12 15 7 16 
 14 6 4 13 10 5 7 4 6 12 17 5 4 15 7 10 16 5  \\
 12 9 1 14 1 6 1 17 7 13 11 9 6 12 15 16 7 10 14 6 8 9 11 13 7 17 12 3 10 8 15 3 16 
 14 11 3 5 13 8 10 4 2 5 17 2 4 15 2 5 16 4  \\
 10 1 7 1 14 1 6 17 8 13 7 10 11 6 15 16 12 8 7 14 6 5 10 13 11 17 8 5 9 12 15 3 16 
 5 14 3 11 13 9 3 4 2 12 17 2 4 15 2 9 16 4  \\
 12 9 1 6 1 14 1 17 13 8 6 9 10 12 15 16 11 6 8 4 14 9 13 10 4 17 12 8 11 4 15 7 16 
 3 10 14 13 3 5 7 11 3 2 17 5 2 15 7 2 16 5  \\
 12 8 1 13 1 2 1 17 2 14 8 2 3 12 15 16 3 13 9 8 3 7 10 11 14 17 12 6 9 7 15 13 16 
 10 6 11 5 7 9 14 4 6 5 17 10 4 15 11 5 16 4  \\
 9 3 1 4 1 3 1 17 4 3 9 14 10 4 15 16 11 6 8 13 9 7 12 10 6 17 14 8 11 7 15 6 16 13 
 10 12 8 7 5 2 11 14 2 17 5 2 15 13 12 16 5  \\

\normalsize
\DeclareFixedFont{\smallrm}{T1}{ptm}{m}{n}{8.25pt}
\noindent
$p(3,18)=54947$. The following is a partial list of all its solutions:
\medskip \\ \smallrm
 18 16 5 17 11 4 2 9 5 2 4 14 2 15 5 4 11 9 16 18 12 17 13 6 7 8 14 9 11 15 6 10 7 12 
 8 16 13 6 18 17 7 14 10 8 3 15 12 1 3 1 13 1 3 10 \\
 18 6 9 17 4 5 16 12 6 4 15 5 9 14 4 6 8 5 13 18 12 17 9 16 11 8 15 3 14 7 10 3 13 12 
 8 3 11 7 18 17 16 10 15 14 2 7 13 2 11 1 2 1 10 1  \\
 18 7 4 17 11 2 16 4 2 7 15 2 4 8 10 14 11 7 13 18 6 17 8 16 12 10 15 6 11 5 14 8 13 
 9 6 5 10 12 18 17 16 5 15 9 3 14 13 1 3 1 12 1 3 9  \\
 18 6 4 17 12 3 16 4 6 3 5 15 4 3 9 6 5 12 13 18 14 17 5 16 9 2 10 15 2 11 12 2 13 8 
 9 14 7 10 18 17 16 11 8 15 7 1 13 1 10 1 14 8 7 11  \\
 18 5 6 17 12 14 16 5 2 6 7 2 15 5 2 13 6 12 7 18 14 17 11 16 8 10 7 4 15 13 12 9 4 8 
 11 14 10 4 18 17 16 9 8 13 15 3 11 10 1 3 1 9 1 3  \\
 18 12 10 17 5 1 16 1 6 1 5 8 15 10 12 6 5 13 14 18 8 17 6 16 10 11 3 12 15 8 3 13 9 
 14 3 7 4 11 18 17 16 4 9 7 15 13 4 2 14 11 2 7 9 2  \\
 18 12 6 17 3 5 16 14 3 6 4 5 3 15 12 4 6 5 13 18 4 17 14 16 11 2 8 12 2 15 10 2 13 9 
 7 8 11 14 18 17 16 10 7 9 8 15 13 1 11 1 7 1 10 9  \\
 18 7 11 17 6 4 16 12 13 7 4 6 14 15 11 4 8 7 6 18 12 17 13 16 10 8 11 14 1 15 1 9 1 
 12 8 10 13 5 18 17 16 9 14 5 3 15 10 2 3 5 2 9 3 2 \\
 18 5 8 17 4 10 16 5 1 4 1 8 1 5 4 15 10 12 14 18 8 17 13 16 3 9 7 10 3 11 12 15 3 14 
 7 9 13 6 18 17 16 11 7 12 6 9 2 15 14 2 13 6 2 11  \\
 18 12 2 17 13 2 4 16 2 3 15 4 7 3 12 14 4 3 13 18 7 17 11 5 16 10 15 12 7 5 14 8 13 9 
 11 5 10 6 18 17 8 16 15 9 6 14 11 10 1 8 1 6 1 9  \\[5pt]
 12 18 2 11 9 2 16 13 2 7 8 10 17 12 9 11 14 7 15 8 18 13 10 16 9 7 12 11 8 6 17 14 5 10
 15 13 6 3 5 18 16 3 4 6 5 3 14 4 17 1 15 1 4 1 \\
 3 18 11 12 3 1 16 1 3 1 8 10 17 13 11 9 12 6 15 8 18 14 10 16 6 9 11 13 8 12 17 6 5 10
 15 9 14 7 5 18 16 13 4 2 5 7 2 4 17 2 15 14 4 7 \\
 1 18 1 6 1 2 12 16 2 11 6 2 17 4 13 15 8 6 4 12 18 11 14 4 16 8 9 7 13 10 17 15 12 11 8
 7 9 14 5 18 10 16 13 7 5 3 9 15 17 3 5 10 14 3  \\
 3 18 12 2 3 8 2 16 3 2 5 7 17 11 8 12 5 14 15 7 18 13 5 8 16 11 4 7 12 10 17 4 14 9 15
 13 4 11 6 18 10 16 1 9 1 6 1 14 17 13 15 10 6 9  \\
 13 18 3 5 6 10 3 16 8 5 3 6 17 11 13 5 10 8 6 15 18 12 9 14 16 11 8 10 13 1 17 1 9 1 12
 15 7 11 14 18 4 16 9 2 7 4 2 12 17 2 4 15 7 14  \\
 10 18 3 12 13 1 3 1 16 1 3 10 17 9 4 15 12 8 13 4 18 14 10 9 4 16 8 6 11 12 17 15 13 9 6
 8 14 7 5 18 11 6 16 2 5 7 2 15 17 2 5 14 11 7  \\
 13 18 3 9 10 1 3 1 16 1 3 4 17 9 13 10 4 15 8 12 18 4 14 9 6 16 10 8 13 11 17 6 12 15 5
 7 8 14 6 18 5 11 16 7 2 12 5 2 17 15 2 7 14 11  \\
 8 18 1 9 1 10 1 3 16 8 13 3 17 9 6 3 10 15 8 12 18 6 14 9 13 16 4 10 6 11 17 4 12 15 5
 7 4 14 13 18 5 11 16 7 2 12 5 2 17 15 2 7 14 11  \\
 12 18 3 7 11 1 3 1 16 1 3 7 17 12 8 13 11 9 15 7 18 14 10 8 6 16 12 9 11 13 17 6 8 10
 15 4 14 9 6 18 4 5 16 13 10 4 2 5 17 2 15 14 2 5  \\
 12 18 3 1 4 1 3 1 16 4 3 11 17 12 4 6 7 13 15 8 18 14 6 11 7 16 12 10 8 6 17 13 7 9 15
 11 14 8 10 18 5 2 16 9 2 13 5 2 17 10 15 14 5 9 \\[5pt]
 7 13 18 10 12 17 15 2 7 8 2 16 11 2 10 13 7 12 8 14 5 18 15 17 11 10 5 8 16 13 12 3 5
 9 14 3 11 6 15 3 18 17 4 9 6 16 1 4 1 14 1 6 4 9 \\
 6 7 18 12 13 17 15 6 1 7 1 16 1 10 6 11 12 7 13 14 4 18 15 17 10 4 9 11 16 12 4 5 13 8
 14 10 9 5 15 11 18 17 8 5 3 16 9 2 3 14 2 8 3 2 \\
 7 5 18 2 10 17 2 5 7 2 6 16 12 5 15 10 7 6 14 9 11 18 13 17 6 12 10 3 16 9 15 3 11 14
 8 3 13 4 12 9 18 17 4 8 11 16 15 4 14 1 13 1 8 1 \\
 5 6 18 13 2 17 5 2 6 9 2 16 5 11 15 6 7 13 8 9 14 18 10 17 7 11 12 8 16 9 15 13 7 10
 3 14 8 11 3 12 18 17 3 4 10 16 15 1 4 1 14 1 12 4 \\
 7 11 18 6 4 17 15 3 7 4 6 3 16 11 4 3 7 6 8 13 14 18 15 17 9 11 12 8 1 16 1 10 1 13 9
 14 8 5 15 12 18 17 10 5 9 2 16 13 2 5 14 2 12 10  \\
 10 5 18 13 4 17 12 5 7 4 15 10 16 5 4 14 7 13 9 12 6 18 10 17 7 11 15 6 9 16 14 13 12
 2 6 8 2 11 9 2 18 17 15 3 8 14 16 3 1 11 1 3 1 8  \\
 7 5 18 2 6 17 2 5 7 2 15 6 16 5 10 14 7 4 6 13 11 18 4 17 12 10 15 4 9 16 14 3 11 13 8
 3 10 12 9 3 18 17 15 8 11 14 16 13 9 1 12 1 8 1  \\
 7 2 18 13 2 17 8 2 7 12 9 15 16 5 11 8 7 13 14 5 9 18 12 17 8 5 11 15 10 16 9 13 1 14 1
 12 1 6 11 10 18 17 4 15 6 3 16 4 14 3 10 6 4 3  \\
 7 2 18 6 2 17 11 2 7 4 6 15 16 12 4 8 7 6 11 4 13 18 14 17 8 9 12 15 3 16 11 10 3 8 13
 9 3 14 5 12 18 17 10 15 5 9 16 1 13 1 5 1 14 10  \\
 5 6 18 10 11 17 5 1 6 1 15 1 5 16 10 6 11 4 13 14 8 18 4 17 12 10 15 4 11 8 16 3 13 9
 14 3 7 12 8 3 18 17 15 9 7 2 13 16 2 14 12 2 7 9  \\[5pt]
 8 13 3 18 14 15 3 12 17 8 3 5 1 16 1 13 1 5 8 14 12 15 18 5 11 2 17 10 2 13 16 2 9 12 14
 7 11 15 10 6 4 18 9 7 17 4 6 16 11 10 4 7 9 6 \\
 11 7 3 18 6 14 3 15 17 7 3 6 11 5 16 13 10 7 6 5 14 12 18 15 11 5 17 10 8 13 2 16 9 2 12
 14 2 8 10 15 4 18 9 13 17 4 8 12 16 1 4 1 9 1 \\
 6 13 7 18 15 5 12 6 17 14 7 5 8 11 6 13 16 5 7 12 15 8 18 10 14 11 17 3 9 13 8 3 12 16 10
 3 15 11 9 14 2 18 4 2 17 10 2 4 9 1 16 1 4 1  \\
 6 13 8 18 14 15 5 6 17 3 12 8 5 3 6 13 16 3 5 14 8 15 18 12 7 4 17 11 9 13 4 10 7 16 14 4
 12 15 9 11 7 18 10 1 17 1 2 1 9 2 16 11 2 10  \\
 13 11 3 18 6 14 3 15 17 2 3 6 2 11 13 2 10 16 6 12 14 7 18 15 5 11 17 10 13 7 5 8 12 9 16
 14 5 7 10 15 8 18 4 9 17 12 1 4 1 8 1 16 4 9  \\
 11 8 12 18 15 5 9 6 14 17 8 5 11 16 6 12 9 5 13 8 15 6 18 14 11 3 9 17 12 3 16 10 13 3 2 7
 15 2 14 4 2 18 10 7 4 17 13 16 1 4 1 7 1 10  \\
 5 8 12 18 13 14 5 15 2 17 8 2 5 16 2 12 4 6 13 8 14 4 18 15 6 11 4 17 12 10 16 6 13 9 3 14
 7 11 3 15 10 18 3 9 7 17 1 16 1 11 1 10 7 9  \\
 9 2 12 18 2 6 14 2 15 17 9 5 6 16 11 12 8 5 13 6 9 14 18 5 15 8 11 17 12 3 16 10 13 3 8 7
 14 3 11 4 15 18 10 7 4 17 13 16 1 4 1 7 1 10  \\
 11 9 13 18 15 4 10 5 14 17 4 9 11 5 16 4 13 10 6 5 15 9 18 14 11 6 12 17 10 3 13 16 6 3 7
 8 15 3 14 12 2 18 7 2 8 17 2 1 16 1 7 1 12 8  \\
 9 13 10 18 11 15 12 4 14 17 9 6 4 10 16 13 11 4 6 12 9 15 18 14 10 6 8 17 11 13 2 16 12 2
 7 8 2 15 14 3 5 18 7 3 8 17 5 3 16 1 7 1 5 1  \\[5pt]
 13 3 9 14 18 3 6 17 10 3 16 4 9 6 13 15 4 8 14 10 6 4 9 18 12 17 8 16 13 11 10 15 7
 14 1 8 1 12 1 5 7 11 18 17 16 5 2 15 7 2 12 5 2 11 \\
 10 3 5 14 18 3 6 17 5 3 16 10 11 6 5 1 15 1 14 1 6 13 10 18 11 17 12 16 2 8 9 2 15 14
 2 13 11 7 8 12 9 4 18 17 16 7 4 8 15 13 9 4 12 7 \\
 2 13 11 2 18 14 2 17 12 3 16 10 5 3 11 13 15 3 5 8 14 12 10 18 5 17 11 16 8 13 9 7 15
 10 12 14 4 8 6 7 9 4 18 17 16 6 4 7 15 1 9 1 6 1 \\
 2 7 11 2 18 14 2 17 4 7 16 12 8 4 11 6 15 7 4 13 14 8 6 18 12 17 11 16 10 6 8 9 15 13
 1 14 1 12 1 10 5 9 18 17 16 3 5 13 15 3 10 9 5 3 \\
 2 13 14 2 18 7 2 17 3 4 16 10 3 7 4 13 3 14 15 4 9 7 10 18 11 17 12 16 8 13 9 6 14 10
 15 5 11 8 6 12 9 5 18 17 16 6 8 5 11 1 15 1 12 1 \\
 14 1 5 1 18 1 13 17 5 3 16 4 9 3 5 14 4 3 10 15 13 4 9 18 11 17 12 16 8 10 14 6 9 7 13
 15 11 8 6 12 10 7 18 17 16 6 8 2 11 7 2 15 12 2 \\
 14 9 11 2 18 6 2 17 4 2 16 9 6 4 11 14 7 13 4 6 15 9 12 18 7 17 11 16 10 8 14 13 7 3 5
 12 15 3 8 10 5 3 18 17 16 13 5 8 12 1 10 1 15 1 \\
 11 3 5 14 18 3 13 17 5 3 10 16 11 2 5 15 2 8 14 2 13 10 7 18 11 17 8 12 16 9 7 15 10 14
 13 8 4 6 7 9 12 4 18 17 6 16 4 15 1 9 1 6 1 12 \\
 11 1 7 1 18 1 14 17 5 3 7 16 11 3 5 6 15 3 7 13 5 14 6 18 11 17 12 10 16 6 9 2 15 13 2
 8 14 2 10 12 9 4 18 17 8 16 4 13 15 10 9 4 12 8 \\
 14 8 11 4 18 10 13 17 4 12 8 16 5 4 11 14 10 15 5 8 13 7 12 18 5 17 11 10 16 7 14 9 6
 15 13 12 1 7 1 6 1 9 18 17 3 16 6 2 3 15 2 9 3 2 \\[5pt]
 17 2 12 6 2 18 16 2 3 11 6 13 3 15 7 12 3 6 17 14 10 11 7 16 18 13 5 9 12 15 7 10 5 11
 14 8 17 9 5 13 16 4 10 18 8 15 4 9 1 14 1 4 1 8 \\
 17 2 3 11 2 18 3 2 16 4 3 10 15 12 4 11 6 14 17 4 8 13 10 6 18 16 12 11 15 8 6 9 14 10
 5 13 17 7 8 12 5 9 16 18 15 7 5 14 1 13 1 9 1 7  \\
 17 7 12 8 2 18 4 2 16 7 2 4 8 15 11 12 4 7 17 14 9 8 10 13 18 16 11 3 12 15 9 3 6 10 14
 3 17 13 11 6 9 5 16 18 10 15 6 5 1 14 1 13 1 5  \\
 17 2 12 6 2 18 4 2 16 11 6 4 10 13 15 12 4 6 17 9 14 11 8 10 18 16 5 13 12 9 15 8 5 11
 10 14 17 7 5 9 8 13 16 18 3 7 15 1 3 1 14 1 3 7  \\
 17 2 9 6 2 18 4 2 16 11 6 4 9 13 15 10 4 6 17 12 14 11 9 7 18 16 10 13 5 8 15 7 12 11
 5 14 17 10 8 7 5 13 16 18 3 12 15 8 3 1 14 1 3 1 \\
 17 2 13 6 2 18 7 2 16 1 6 1 10 1 7 15 13 6 17 8 14 11 7 10 18 16 5 12 8 9 13 15 5 11
 10 14 17 8 5 9 12 4 16 18 3 11 4 15 3 9 14 4 3 12 \\
 17 7 12 10 1 18 1 11 1 7 16 4 9 15 10 12 4 7 17 11 14 4 9 13 18 10 6 16 12 15 8 11 9
 6 3 14 17 13 3 8 6 5 3 18 16 15 2 5 8 2 14 13 2 5 \\
 17 5 12 2 7 18 2 5 8 2 16 4 7 5 15 12 4 8 17 14 7 4 9 11 18 13 8 16 12 10 15 6 9 3 14
 11 17 3 6 13 10 3 9 18 16 6 15 11 1 14 1 10 1 13 \\
 8 17 11 3 5 18 16 3 13 8 5 3 15 6 11 14 5 7 8 17 6 12 13 16 18 7 11 6 15 9 14 10 2 7
 12 2 13 17 2 9 16 4 10 18 15 14 4 12 1 9 1 4 1 10 \\
 7 17 8 13 4 18 16 2 7 4 2 8 15 2 4 11 7 13 14 17 8 12 6 16 18 9 5 11 15 6 10 13 5 14
 12 9 6 17 5 11 16 10 3 18 15 9 3 12 14 1 3 1 10 1  \\[5pt]
 12 8 2 16 4 2 18 7 2 4 8 17 14 12 4 7 15 6 9 8 16 11 13 7 6 18 12 14 9 17 10 6 15
 11 5 3 13 16 9 3 5 10 14 3 18 11 5 17 15 1 13 1 10 1 \\
 1 3 1 16 1 3 18 6 8 3 11 17 2 14 6 2 15 8 2 7 16 6 11 12 13 18 8 7 14 17 9 10 15 5
 11 7 12 16 13 5 9 4 10 14 18 5 4 17 15 12 9 4 13 10  \\
 11 12 7 16 10 3 18 13 14 3 7 17 11 3 12 10 2 15 7 2 16 13 2 14 11 18 10 12 1 17 1
 9 1 15 8 13 5 16 14 6 4 9 5 8 18 4 6 17 5 15 4 9 8 6  \\
 11 2 6 16 2 4 18 2 14 6 4 17 11 13 3 4 6 15 3 12 16 8 3 14 11 18 9 13 5 17 8 10 12
 15 5 7 9 16 14 8 5 13 10 7 18 12 9 17 1 15 1 7 1 10 \\
 4 7 13 16 10 4 18 5 14 7 4 17 9 5 6 10 13 7 15 5 16 6 9 14 11 18 10 12 6 17 13 3 9
 8 15 3 11 16 14 3 12 1 8 1 18 1 2 17 11 2 15 8 2 12 \\
 7 10 11 16 2 13 18 2 7 12 2 17 10 14 11 5 7 6 15 13 16 5 12 10 6 18 11 5 14 17 8 6
 9 13 15 12 1 16 1 8 1 4 9 14 18 3 4 17 8 3 15 4 9 3 \\
 13 1 6 1 16 1 18 14 4 6 9 17 11 4 13 3 6 15 4 3 9 16 14 3 11 18 12 7 13 17 9 10 8
 15 5 7 11 14 16 12 5 8 10 7 18 2 5 17 2 15 8 2 12 10 \\
 9 5 2 13 16 2 18 5 2 6 9 17 14 5 7 8 6 13 15 12 9 16 7 6 8 18 10 14 11 17 7 13 12 8
 15 3 4 10 16 3 11 4 14 3 18 12 4 17 10 1 15 1 11 1 \\
 9 1 6 1 16 1 18 11 8 6 9 17 14 10 7 12 6 8 15 11 9 16 7 13 10 18 8 14 12 17 7 11 3 5
 15 10 3 13 16 5 3 12 14 4 18 5 2 17 4 2 15 13 2 4 \\
 9 3 12 13 8 3 18 16 14 3 9 17 1 8 1 12 1 13 15 6 9 10 8 14 16 18 6 11 12 17 4 13 10
 6 15 4 5 7 14 11 4 16 5 10 18 7 2 17 5 2 15 11 2 7 \\[5pt]
 1 13 1 8 1 2 12 18 2 15 9 2 8 14 3 13 16 17 3 12 9 8 3 7 11 15 18 10 14 13 9 7 12 
 16 5 17 11 6 10 7 5 15 4 14 6 18 5 4 11 10 16 6 4 17 \\
 9 13 1 4 1 6 1 18 4 15 9 12 6 4 8 13 16 17 11 6 9 14 5 8 12 15 18 10 5 13 11 7 8 
 16 5 17 14 12 10 7 2 15 11 2 3 18 2 7 3 10 16 14 3 17  \\
 1 11 1 2 1 12 2 18 6 2 13 15 7 11 8 6 16 17 12 10 7 14 6 8 13 11 18 15 7 9 10 12 8 
 16 5 17 14 4 13 9 5 10 4 15 3 18 5 4 3 9 16 14 3 17  \\
 1 6 1 2 1 12 2 18 6 2 13 15 7 5 8 6 16 17 12 5 7 14 11 8 13 5 18 15 7 9 10 12 8 16 
 11 17 14 4 13 9 3 10 4 15 3 18 11 4 3 9 16 14 10 17  \\
 10 3 1 11 1 3 1 18 5 3 12 10 15 7 5 11 16 17 6 14 5 7 10 12 13 6 18 11 15 7 8 9 6 
 16 14 17 12 4 13 8 2 9 4 2 15 18 2 4 8 14 16 9 13 17  \\
 9 3 1 4 1 3 1 18 4 3 9 12 15 4 8 13 16 17 11 2 9 14 2 8 12 2 18 10 15 13 11 7 8 16 
 6 17 14 12 10 7 5 6 11 13 15 18 5 7 6 10 16 14 5 17   \\
 5 1 7 1 8 1 5 18 14 9 7 12 5 8 13 15 16 17 7 9 4 10 8 14 12 4 18 11 13 9 4 15 10 16 
 6 17 3 12 14 11 3 6 13 10 3 18 2 15 6 2 16 11 2 17    \\
 8 9 1 4 1 6 1 18 4 8 12 9 6 4 14 15 16 17 8 6 13 9 2 12 7 2 18 11 2 14 10 15 7 16 13 
 17 12 5 3 11 7 10 3 5 14 18 3 15 13 5 16 11 10 17    \\
 4 12 10 2 3 4 2 18 3 2 4 6 3 10 12 15 16 17 6 9 13 14 7 8 10 6 18 12 11 9 7 15 8 16 
 13 17 14 5 7 9 11 8 1 5 1 18 1 15 13 5 16 14 11 17   \\
 4 12 10 7 13 4 2 18 14 2 4 7 2 10 12 6 16 17 13 7 15 9 6 14 10 11 18 12 5 6 8 9 13 16 
 5 17 15 11 14 8 5 9 1 3 1 18 1 3 8 11 16 3 15 17  \\

\normalsize
\DeclareFixedFont{\smallrm}{T1}{ptm}{m}{n}{7.75pt}
\noindent
$p(3,19)=249280$. The following is a partial list of all its solutions:
\medskip \\ \smallrm
 19 17 13 18 4 11 8 2 16 4 2 9 15 2 4 8 13 11 14 17 19 9 18 12 8 16 5 7 15 11 13 9 5
 14 10 7 12 17 5 6 19 18 16 7 15 10 6 3 14 12 1 3 1 6 1 3 10 \\
 19 17 10 18 5 2 11 9 2 16 5 2 12 10 14 15 5 9 11 17 19 13 18 4 10 12 16 9 4 14 11
 15 6 4 7 13 8 17 12 6 19 18 7 16 14 8 6 15 3 13 7 1 3 1 8 1 3 \\
 19 17 3 18 11 7 3 9 2 16 3 2 10 7 2 15 11 9 14 17 19 7 18 10 13 4 16 9 11 12 4 15 6
 14 10 4 8 17 13 6 19 18 12 16 5 8 6 15 14 1 5 1 13 1 8 12 5 \\
 19 17 5 18 6 1 8 1 5 1 16 6 10 7 5 8 14 15 6 17 19 7 18 10 8 13 11 16 3 7 12 14 3 15
 10 9 3 17 11 13 19 18 4 12 16 9 14 4 2 15 11 2 4 13 2 9 12 \\
 19 17 4 18 13 8 1 4 1 7 1 16 4 10 8 15 12 7 13 17 19 14 18 8 10 7 11 2 16 12 2 15 13
 2 9 10 14 17 11 6 19 18 12 5 9 16 6 15 3 5 11 14 3 6 9 5 3 \\
 19 17 3 18 10 12 3 6 14 1 3 1 16 1 6 10 15 7 12 17 19 6 18 14 5 7 10 11 13 16 5 12
 15 7 8 9 5 17 14 11 19 18 13 8 4 9 16 2 15 4 2 11 8 2 4 9 13 \\
 19 17 6 18 1 11 1 7 1 6 10 15 12 16 9 7 6 11 14 17 19 10 18 7 9 12 4 15 13 11 16 4
 10 14 9 8 4 17 12 5 19 18 13 15 8 5 3 16 14 2 3 5 2 8 3 2 13 \\
 19 17 2 18 6 2 3 7 2 11 3 6 14 16 3 7 15 13 6 17 19 11 18 7 4 12 9 14 10 4 16 13 15
 11 4 8 9 17 12 10 19 18 14 5 8 13 9 16 15 5 10 12 1 8 1 5 1 \\
 19 17 6 18 3 7 8 10 3 6 12 15 3 7 16 8 6 14 10 17 19 7 18 12 8 4 11 15 13 10 4 16 14
 5 9 4 12 17 11 5 19 18 13 15 9 5 2 14 16 2 11 1 2 1 9 1 13 \\
 19 16 14 18 5 6 17 7 2 11 5 2 6 15 2 7 5 14 16 6 19 11 18 7 17 13 4 10 8 15 12 4 14
 11 9 16 4 8 10 13 19 18 17 12 9 15 8 3 1 10 1 3 1 13 9 3 12 \\[5pt]
 4 19 13 5 18 4 7 11 17 5 4 12 16 6 7 5 13 10 15 11 6 19 7 18 12 14 17 6 10 16 13 11
 2 9 15 2 8 12 2 10 14 19 18 9 17 8 16 1 3 1 15 1 3 9 8 14 3 \\
 5 19 12 7 18 1 5 1 17 1 9 7 5 11 16 12 10 15 13 7 9 19 6 18 14 11 17 10 12 6 9 16 13
 15 8 4 6 11 10 14 4 19 18 8 17 4 13 3 16 15 2 3 8 2 14 3 2 \\
 2 19 4 2 18 11 2 4 17 1 9 1 4 1 10 16 5 11 15 13 9 19 5 18 14 10 17 12 5 11 9 6 16 13
 15 7 10 8 6 14 12 19 18 7 17 6 8 13 3 16 15 7 3 12 14 8 3 \\
 13 19 10 3 18 4 9 3 11 17 4 3 16 10 13 4 9 7 12 15 11 19 14 18 10 7 9 17 13 16 8 12
 11 7 1 15 1 14 1 8 6 19 18 5 12 17 16 6 8 5 2 15 14 2 6 5 2 \\
 5 19 2 3 18 2 5 3 2 17 4 3 5 10 16 4 8 15 13 9 4 19 14 18 10 8 11 17 7 9 12 16 13 15
 8 10 7 14 11 9 6 19 18 12 7 17 13 6 16 15 11 1 14 1 6 1 12 \\
 9 19 10 12 18 6 11 3 13 17 9 3 6 10 16 3 12 15 11 6 9 19 13 18 10 14 5 17 8 12 11 16
 5 15 4 7 13 8 5 4 14 19 18 7 4 17 8 2 16 15 2 7 1 2 1 14 1 \\
 12 19 3 13 18 6 3 10 2 17 3 2 6 12 2 9 16 13 10 6 15 19 7 18 14 9 12 17 5 10 7 13 11
 16 5 9 15 8 7 14 5 19 18 4 11 17 8 1 4 1 16 1 15 4 14 8 11 \\
 7 19 8 2 18 13 2 4 7 2 17 8 4 16 12 10 7 4 15 13 8 19 6 18 14 9 10 12 17 6 16 11 5 13
 15 9 6 10 5 14 12 19 18 11 5 9 17 16 3 1 15 1 3 1 14 11 3 \\
 8 19 5 3 18 6 9 3 5 8 17 3 6 16 5 13 9 7 8 6 15 19 12 18 14 7 9 10 17 13 16 2 11 7 2
 12 15 2 10 14 4 19 18 13 11 4 17 16 12 10 4 1 15 1 14 1 11 \\
 7 19 9 3 18 11 6 3 7 8 17 3 9 6 16 10 7 11 8 15 6 19 9 18 13 14 10 8 17 11 12 16 1 5 1
 15 1 10 13 5 14 19 18 12 4 5 17 2 16 4 2 15 13 2 4 14 12 \\[5pt]
 11 9 19 2 12 10 2 17 14 2 7 9 11 15 18 16 10 12 7 8 13 9 19 14 11 17 7 10 8 15 12 3 16
 18 13 3 4 8 14 3 6 4 19 17 5 15 4 6 13 16 5 1 18 1 6 1 5 \\
 12 7 19 11 5 2 14 17 2 7 5 2 15 12 18 11 5 7 13 4 16 14 19 10 4 17 12 11 15 4 9 3 13
 18 10 3 14 16 8 3 9 6 19 17 15 10 13 8 6 1 9 1 18 1 16 6 8 \\
 2 13 19 2 5 12 2 17 10 14 5 11 15 3 18 13 5 3 12 10 16 3 19 11 14 17 9 6 15 13 10 12
 7 18 6 11 9 16 8 14 7 6 19 17 15 4 9 8 7 1 4 1 18 1 16 4 8 \\
 2 8 19 2 12 10 2 17 9 14 8 11 13 7 18 15 10 12 9 8 16 7 19 11 14 17 13 10 9 7 12 15 1
 18 1 11 1 16 5 14 13 6 19 17 5 3 4 15 6 3 5 4 18 3 16 6 4 \\
 9 4 19 5 14 10 4 17 8 5 9 4 11 15 18 5 10 8 13 14 9 16 19 6 11 17 8 10 12 15 6 3 13 18
 14 3 11 6 16 3 7 12 19 17 2 15 13 2 7 1 2 1 18 1 12 16 7 \\
 8 10 19 1 5 1 14 1 17 8 5 15 10 11 18 16 5 6 8 9 13 14 19 10 6 11 17 15 12 9 7 6 16 18
 13 4 14 11 7 9 4 12 19 15 17 4 7 3 13 16 2 3 18 2 12 3 2 \\
 11 1 19 1 14 1 4 13 17 8 12 4 11 15 18 16 4 10 8 14 9 13 19 12 11 5 17 8 10 15 9 5 16
 18 14 13 12 5 7 10 9 6 19 3 17 15 7 3 6 16 2 3 18 2 7 6 2 \\
 2 5 19 2 3 14 2 5 3 17 13 6 3 5 18 15 9 16 6 1 14 1 19 1 13 6 9 17 10 11 12 15 7 18 16
 14 9 8 13 10 7 11 19 12 4 17 8 15 7 4 10 16 18 11 4 8 12 \\
 3 4 19 8 3 14 4 10 3 17 12 4 8 2 18 15 2 16 10 2 14 8 19 12 1 13 1 17 1 10 9 15 11 18
 16 14 12 5 7 13 9 6 19 5 11 17 7 15 6 5 9 16 18 13 7 6 11 \\
 9 11 19 1 13 1 4 1 14 17 9 4 6 11 18 15 4 16 13 6 9 7 19 14 10 11 6 17 12 7 8 15 13 18
 16 10 3 7 14 8 3 12 19 5 3 17 10 15 8 5 2 16 18 2 12 5 2 \\[5pt]
 18 3 11 19 12 3 14 10 6 3 16 13 7 17 11 6 15 12 10 18 7 14 6 19 8 13 11 16 7 10 12 17
 15 8 2 9 14 2 18 13 2 5 8 19 16 9 4 5 15 17 1 4 1 5 1 9 4 \\
 18 9 2 19 5 2 13 14 2 7 5 9 16 10 17 15 5 7 11 18 13 9 14 19 10 7 6 8 12 16 11 15 17
 6 13 10 8 14 18 4 6 12 11 19 4 8 16 15 3 4 17 1 3 1 12 1 3 \\
 18 14 10 19 3 5 11 6 3 15 16 5 3 10 6 17 14 5 11 18 7 6 12 19 10 15 13 16 7 4 11 14 9
 17 4 12 7 8 18 4 13 15 9 19 16 1 8 1 12 1 2 17 9 2 13 8 2 \\
 18 2 6 19 2 5 14 2 9 6 16 5 13 11 15 17 6 5 9 18 10 14 12 19 4 11 13 16 9 4 15 10 8 17
 4 12 14 11 18 7 13 8 10 19 16 3 15 7 12 3 8 17 1 3 1 7 1 \\
 18 14 10 19 1 11 1 13 1 15 3 16 9 10 3 17 14 11 3 18 6 13 9 19 10 15 12 6 16 11 5 14 9
 17 6 13 5 8 18 12 7 15 5 19 4 16 8 2 7 4 2 17 12 2 4 8 7 \\
 18 12 1 19 1 7 1 14 3 15 9 16 3 7 12 17 3 8 13 18 9 7 14 19 11 15 8 12 16 4 9 10 13 17
 4 8 11 14 18 4 6 15 10 19 5 16 13 6 11 2 5 17 2 10 6 2 5 \\
 18 3 5 19 6 3 14 2 5 3 2 6 16 2 5 17 15 8 6 18 1 14 1 19 1 11 8 12 13 16 10 7 15 17 9 8
 14 11 18 7 12 10 13 19 9 4 16 7 15 11 4 17 10 12 9 4 13 \\
 18 12 14 19 7 1 4 1 13 1 15 4 7 16 12 17 4 14 2 18 7 2 13 19 2 5 15 12 11 9 16 5 14 17
 10 8 13 5 18 9 11 6 15 19 8 10 3 16 6 9 3 17 11 8 3 6 10 \\
 18 3 10 19 7 3 4 14 12 3 15 4 7 10 16 17 4 11 8 18 7 12 14 19 10 13 15 8 1 11 1 16 1 17
 12 9 8 14 18 13 6 11 15 19 5 9 2 6 16 2 5 17 2 13 6 9 5 \\
 18 5 1 19 1 14 1 5 9 2 16 10 2 5 15 2 17 11 9 18 14 4 10 19 12 13 4 16 9 11 15 4 7 10 17
 14 8 12 18 13 7 11 6 19 16 8 15 3 7 6 12 3 17 13 8 3 6 \\[5pt]
 2 6 9 2 19 11 2 1 6 1 10 1 9 12 16 6 18 11 15 17 4 10 9 14 19 4 12 8 13 11 4 16 10 7 15
 18 8 17 14 12 5 7 13 3 19 8 5 3 16 7 15 3 5 14 18 17 13 \\
 8 9 4 10 19 11 1 4 1 8 1 9 4 13 10 16 18 11 8 17 15 9 5 14 19 10 6 13 5 11 12 7 16 6 5
 18 15 17 14 7 6 13 3 12 19 2 3 7 2 16 3 2 15 14 18 17 12 \\
 12 3 4 6 19 3 7 4 10 3 6 13 4 12 7 16 18 6 15 10 17 14 7 9 19 13 12 1 11 1 10 1 16 9 15
 18 14 8 17 13 11 5 2 9 19 2 8 5 2 16 15 14 11 5 18 8 17 \\
 7 8 4 2 19 5 2 4 7 2 8 5 4 12 16 17 7 5 18 8 15 3 11 14 19 3 12 13 6 3 9 16 10 17 11 6
 15 18 14 12 9 13 6 10 19 1 11 1 16 1 9 17 15 14 10 13 18 \\
 9 10 6 4 19 5 12 13 4 6 9 5 10 4 16 17 6 5 18 12 9 13 15 10 19 14 1 8 1 11 1 16 12 17 7
 13 8 18 15 3 14 11 7 3 19 8 2 3 16 2 7 17 2 11 15 14 18 \\
 4 11 5 8 19 4 10 1 5 1 4 1 8 11 5 16 17 10 18 13 14 8 3 15 19 11 3 12 10 6 3 9 16 13 17
 14 6 18 7 15 12 9 2 6 19 2 7 13 2 16 14 9 17 12 7 15 18 \\
 12 8 5 1 19 1 2 1 5 2 8 10 2 12 5 16 17 13 18 8 4 14 10 15 19 4 12 7 9 11 4 13 16 10 17
 7 14 18 9 15 6 11 3 7 19 13 3 6 9 16 3 14 17 11 6 15 18 \\
 10 11 5 1 19 1 2 1 5 2 9 10 2 11 5 16 17 13 18 7 9 14 10 15 19 11 6 7 12 8 9 13 16 6 17
 7 14 18 8 15 6 12 3 4 19 13 3 8 4 16 3 14 17 4 12 15 18 \\
 5 12 2 3 19 2 5 3 2 7 10 3 5 8 12 13 17 7 18 16 14 10 8 15 19 7 6 12 9 13 11 8 10 6 17
 14 16 18 9 15 6 4 11 13 19 1 4 1 9 1 14 4 17 16 11 15 18 \\
 4 5 8 12 19 4 2 5 10 2 4 8 2 5 13 16 12 17 18 10 8 11 15 7 19 14 9 6 13 12 10 7 16 11 6
 17 9 18 15 7 14 6 13 3 19 11 9 3 1 16 1 3 1 17 15 14 18 \\[5pt]
 11 17 8 2 10 19 2 13 18 2 3 8 11 16 3 10 5 15 3 17 8 13 5 14 11 19 10 18 5 12 16 9 7 15
 4 13 6 17 14 4 7 9 12 6 4 19 18 16 7 15 6 9 1 14 1 12 1 \\
 8 17 12 9 7 19 4 13 18 8 11 4 7 9 16 12 4 15 8 17 7 13 11 9 14 19 5 18 12 6 10 16 5 15
 11 13 6 17 5 14 1 10 1 6 1 19 18 3 16 15 2 3 10 2 14 3 2 \\
 12 17 1 3 1 19 1 3 18 2 5 3 2 12 16 2 5 11 15 17 9 14 5 4 13 19 12 18 4 11 9 16 10 4 15
 8 14 17 13 7 9 11 6 10 8 19 18 7 16 6 15 14 13 8 10 7 6 \\
 6 17 7 2 8 19 2 6 18 2 7 4 13 8 6 16 4 11 7 17 15 4 8 14 9 19 13 18 10 11 12 1 16 1 9 1
 15 17 14 10 13 11 5 12 9 19 18 3 5 16 10 3 15 14 5 3 12 \\
 1 17 1 3 1 19 11 3 18 12 2 3 13 2 6 16 2 5 11 17 15 6 12 5 14 19 13 18 6 5 11 9 16 10 7
 12 15 17 8 14 13 9 7 4 10 19 18 8 4 16 7 9 15 4 14 10 8 \\
 6 17 1 8 1 19 1 6 18 3 7 10 8 3 6 16 13 3 7 17 15 8 10 11 14 19 7 18 5 12 13 9 16 10 5
 11 15 17 4 14 5 9 12 4 13 19 18 11 4 16 2 9 15 2 14 12 2 \\
 1 17 1 9 1 19 2 12 18 2 3 4 2 9 3 15 4 16 3 17 12 4 6 9 14 19 11 18 13 6 10 15 7 12 16 8
 6 17 11 14 7 10 13 5 8 19 18 15 7 5 11 16 10 8 14 5 13 \\
 5 17 13 9 10 19 5 1 18 1 3 1 5 9 3 10 13 16 3 17 15 6 8 9 14 19 10 18 6 12 13 8 11 7 16
 6 15 17 4 14 8 7 12 4 11 19 18 2 4 7 2 16 15 2 14 12 11 \\
 6 8 17 13 10 19 2 6 18 2 8 11 2 16 6 10 5 13 15 8 17 14 5 11 7 19 10 18 5 12 16 13 7 9 15
 11 14 1 17 1 7 1 12 9 4 19 18 16 3 4 15 14 3 9 4 12 3 \\
 10 13 17 2 5 19 2 4 18 2 5 10 4 16 6 13 5 4 15 11 17 6 10 12 14 19 8 18 6 13 16 11 9 3 15
 8 12 3 17 14 7 3 9 11 8 19 18 16 7 12 15 1 9 1 14 1 7 \\[5pt]
 17 2 13 6 2 7 19 2 5 14 6 10 18 7 5 16 13 6 17 15 5 7 10 1 14 1 19 1 11 12 13 18 16 10
 9 15 17 3 8 14 11 3 12 4 9 3 19 8 4 16 18 15 11 4 9 12 8 \\
 17 8 5 1 11 1 19 1 5 14 8 10 18 13 5 16 11 2 17 8 2 15 10 2 14 4 19 13 11 12 4 18 16 10
 9 4 17 15 6 14 7 13 12 3 9 6 19 3 7 16 18 3 6 15 9 12 7 \\
 17 8 2 9 12 2 19 1 2 1 8 1 18 9 14 16 3 12 17 8 3 5 15 9 3 13 19 5 11 14 12 18 16 5 10 7
 17 4 15 13 11 6 4 7 14 10 19 4 6 16 18 7 11 13 15 6 10  \\
 17 9 5 1 12 1 19 1 5 6 11 9 18 14 5 16 6 12 17 7 8 9 11 6 15 13 19 7 14 8 12 18 16 10 11
 7 17 3 8 13 15 3 4 14 10 3 19 4 2 16 18 2 4 13 2 10 15  \\
 1 17 1 12 1 2 19 6 2 14 9 2 18 4 6 16 12 7 4 17 9 6 15 4 14 7 19 13 11 12 9 18 16 7 10 8
 5 17 15 14 11 13 5 3 8 10 19 3 5 16 18 3 11 8 15 13 10 \\
 1 17 1 12 1 2 19 6 2 8 4 2 18 14 6 4 12 16 8 17 4 6 15 11 5 13 19 8 14 12 5 18 9 10 16 11
 5 17 15 13 7 3 9 14 10 3 19 11 7 3 18 16 9 13 15 10 7  \\
 8 2 17 11 2 5 19 2 13 8 14 5 18 16 6 11 7 5 8 15 17 6 13 10 7 14 19 11 6 12 16 18 7 9 10
 15 13 1 17 1 14 1 12 9 4 10 19 16 3 4 18 15 3 9 4 12 3  \\
 8 6 17 1 11 1 19 1 6 8 14 10 18 16 9 6 11 5 8 12 17 15 10 5 9 14 19 13 11 5 16 18 12 10
 9 7 2 15 17 2 14 13 2 7 4 12 19 16 3 4 18 7 3 15 4 13 3  \\
 10 2 17 6 2 13 19 2 9 5 6 10 18 16 14 5 8 6 9 13 17 5 10 15 11 8 19 12 9 14 16 18 4 13 8
 7 11 4 17 15 12 3 4 7 14 3 19 16 11 3 18 7 1 12 1 15 1  \\
 10 3 17 9 2 3 19 2 7 3 2 10 18 9 14 16 7 12 8 15 17 11 10 9 7 13 19 8 5 14 12 18 16 11 5
 15 8 4 17 13 5 6 4 12 14 11 19 4 6 16 18 15 1 13 1 6 1  \\[5pt]
 18 4 5 8 12 14 4 19 5 6 17 4 8 15 5 16 6 12 3 18 14 8 3 6 11 13 3 19 17 15 12 9 16 10 7
 14 11 2 18 13 2 9 7 2 10 15 17 19 11 16 7 9 1 13 1 10 1 \\
 18 7 2 10 8 2 14 19 2 7 17 11 13 8 10 16 3 7 15 18 3 14 8 11 3 10 13 19 17 12 6 9 16 5
 15 11 14 6 18 5 13 9 12 4 6 5 17 19 4 16 15 9 1 4 1 12 1  \\
 18 9 1 6 1 14 1 19 7 13 6 9 17 15 10 16 7 6 3 18 14 9 3 13 7 10 3 19 12 15 17 11 16 8 4
 14 10 13 18 4 5 12 8 11 4 15 5 19 17 16 2 8 5 2 12 11 2  \\
 18 4 11 13 5 9 4 19 12 14 5 4 17 15 11 9 5 13 16 18 7 12 10 8 14 9 11 19 7 15 17 13 8 10
 12 16 7 2 18 14 2 8 6 2 10 15 3 19 17 6 3 1 16 1 3 1 6  \\
 18 9 5 2 13 14 2 19 5 2 12 9 17 15 5 10 11 6 13 18 14 9 16 12 6 7 10 19 11 15 17 6 13 7
 8 14 12 10 18 16 11 7 3 8 4 15 3 19 17 4 3 1 8 1 4 1 16  \\
 18 11 1 6 1 14 1 19 2 13 6 2 17 11 2 8 4 6 15 18 14 4 16 13 8 11 4 19 7 12 17 9 10 8 15
 14 7 13 18 16 5 9 12 10 7 3 5 19 17 3 15 9 5 3 10 12 16  \\
 18 11 1 5 1 14 1 19 2 5 6 2 17 11 2 5 12 6 15 18 14 10 16 4 6 11 13 19 4 12 17 9 10 4 15
 14 7 8 18 16 13 9 12 10 7 3 8 19 17 3 15 9 7 3 13 8 16  \\
 18 9 2 12 10 2 5 19 2 14 11 9 5 17 15 10 12 7 5 18 16 9 11 8 14 7 10 19 13 12 15 17 8 7
 11 4 6 16 18 14 4 8 13 6 3 4 15 19 3 17 6 1 3 1 16 1 13  \\
 18 7 2 14 10 2 3 19 2 7 3 4 15 17 3 10 4 7 14 18 13 4 16 8 12 9 10 19 15 6 11 17 8 14 13
 9 6 12 18 16 5 8 11 6 15 9 5 19 13 17 12 1 5 1 11 1 16  \\
 18 10 3 5 13 14 3 19 8 5 3 4 10 17 15 5 4 8 13 18 14 4 16 10 12 7 8 19 11 6 15 17 13 7 9
 14 6 12 18 16 11 7 2 6 9 2 15 19 2 17 12 1 11 1 9 1 16  \\[5pt]
 11 12 1 3 1 4 1 3 19 16 4 3 11 9 12 4 17 18 13 5 14 8 15 9 11 5 16 12 19 10 8 5 13 9 17 
 14 18 7 15 8 10 2 6 16 2 7 13 2 19 6 14 10 17 7 15 18 6  \\
 5 3 11 9 6 3 5 14 19 3 16 6 5 9 11 15 17 18 6 12 4 13 14 9 8 4 11 16 19 10 4 15 12 8 17 
 13 18 14 7 1 10 1 8 1 16 12 7 15 19 13 2 10 17 2 7 18 2  \\
 12 9 3 5 13 14 3 15 19 5 3 9 16 12 4 5 17 18 13 4 14 9 8 15 4 7 12 10 19 16 11 8 13 7 17 
 14 18 6 10 15 8 7 11 1 6 1 16 1 19 10 2 6 17 2 11 18 2   \\
 6 1 7 1 10 1 12 6 19 14 7 9 16 13 6 10 17 18 7 12 4 9 15 11 14 4 10 13 19 16 4 9 12 8 17 
 11 18 2 15 14 2 13 8 2 5 3 16 11 19 3 5 8 17 3 15 18 5   \\
 9 10 1 7 1 12 1 13 19 15 9 7 10 16 11 6 17 18 12 7 9 13 6 10 14 15 11 4 19 6 16 12 4 8 17 
 13 18 4 11 14 2 15 8 2 5 3 2 16 19 3 5 8 17 3 14 18 5    \\
 8 10 1 3 1 4 1 3 19 8 4 3 10 16 14 4 17 18 8 7 5 11 15 10 12 13 5 7 19 14 16 9 5 11 17 7 
 18 12 15 13 2 9 6 2 14 11 2 16 19 6 12 9 17 13 15 18 6   \\
 10 12 1 8 1 3 1 11 19 3 15 10 8 3 12 16 17 18 5 11 14 8 10 9 5 13 15 12 19 7 5 11 16 9 17 
 14 18 7 2 13 6 2 15 9 2 7 4 6 19 16 14 4 17 13 6 18 4    \\
 8 1 10 1 7 1 3 14 19 8 3 15 7 10 3 16 17 18 8 13 7 9 14 2 10 12 2 15 19 2 11 9 16 13 17 4 
 18 14 12 6 4 9 11 15 5 4 6 13 19 16 5 12 17 6 11 18 5    \\
 8 1 3 1 9 1 3 11 19 8 3 15 12 7 9 16 17 18 8 11 14 7 10 6 9 12 13 15 19 7 6 11 16 10 17 14 
 18 6 12 2 13 5 2 15 10 2 4 5 19 16 14 4 17 5 13 18 4     \\
 10 1 3 1 13 1 3 7 19 14 3 10 8 15 12 7 17 18 13 5 16 8 10 7 14 5 11 12 19 15 8 5 13 9 17 6 
 18 16 11 14 12 2 6 9 2 15 4 2 19 6 11 4 17 9 16 18 4

\end{document}